\theoremstyle{plain}
\newtheorem{thm}{Theorem}
\newtheorem{lem}{Lemma}
\begin{document}

\begin{frontmatter}
\title{Bayesian Solution Uncertainty Quantification for Differential Equations\thanksref{T1}}
%
\relateddois{T1}{Related articles:
DOI:~\relateddoi[ms=BAXXX]{Related item:}{10.1214/16-BAXXX},
DOI:~\relateddoi[ms=BAXXX]{Related item:}{10.1214/16-BAXXX},
DOI:~\relateddoi[ms=BAXXX]{Related item:}{10.1214/16-BAXXX};
rejoinder at
DOI:~\relateddoi[ms=BAXXX]{Related item:}{10.1214/16-BAXXX}.}
%
%
\runtitle{Bayesian Solution Uncertainty Quantification for
Differential Equations}

\begin{aug}
\author[addr1]{\fnms{Oksana A.} \snm{Chkrebtii}\ead[label=e1]{oksana@stat.osu.edu}},
\author[addr2]{\fnms{David A.} \snm{Campbell}\ead[label=e2]{dac5@stat.sfu.ca}},
\author[addr3]{\fnms{Ben} \snm{Calderhead}\ead[label=e3]{b.calderhead@ucl.ac.uk}},
\and
\author[addr4]{\fnms{Mark A.} \snm{Girolami}\ead[label=e4]{M.Girolami@warwick.ac.uk}}

\runauthor{O. A. Chkrebtii, D. A. Campbell, B. Calderhead, and M. A. Girolami}

\address[addr1]{Department of Statistics, The Ohio State University, Columbus, OH, USA, \printead{e1}}
\address[addr2]{Department of Statistics \& Actuarial Science, Simon Fraser University, Canada, \printead{e2}}
\address[addr3]{Department of Mathematics, Imperial College London, London, UK, \printead{e3}}
\address[addr4]{Department of Statistics, The University of Warwick, Coventry, UK, \printead{e4}}
\end{aug}

%
\begin{abstract}
We explore probability modelling of discretization uncertainty for
system states defined implicitly by ordinary or partial differential
equations. Accounting for this uncertainty can avoid posterior
under-coverage when likelihoods are constructed from a coarsely
discretized approximation to system equations. A formalism is proposed
for inferring a fixed but \textit{a priori} unknown model trajectory
through Bayesian updating of a prior process conditional on model
information. A one-step-ahead sampling scheme for interrogating the
model is described, its consistency and first order convergence
properties are proved, and its computational complexity is shown to be
proportional to that of numerical explicit one-step solvers. Examples
illustrate the flexibility of this framework to deal with a wide
variety of complex and large-scale systems. Within the calibration
problem, discretization uncertainty defines a layer in the Bayesian
hierarchy, and a Markov chain Monte Carlo algorithm that targets this
posterior distribution is presented. This formalism is used for
inference on the JAK-STAT delay differential equation model of protein
dynamics from indirectly observed measurements. The discussion outlines
implications for the new field of probabilistic numerics.
\end{abstract}

%
\begin{keyword}
\kwd{Bayesian numerical analysis}
\kwd{uncertainty quantification}
\kwd{Gaussian processes}
\kwd{differential equation models}
\kwd{uncertainty in computer models}
\end{keyword}

\end{frontmatter}


\section{Quantifying uncertainty for differential equation models}\label{sec:PoseriorUQ}

Many scientific, economic, and engineering disciplines represent the
spatio-temporal evolution of complex systems implicitly as differential
equations, using few but readily interpretable parameters. Differential
equation models describe the natural dependence between system states
and their rates of change on an open spatio-temporal domain, $\mathcal
{D}\subset\mathbb{R}^{d}$. Their relationship, represented by the
function $F$, is fully specified up to some physical constants $\theta
$ belonging to a parameter space $\Theta$. Mathematically, system
states, $u: \mathcal{D}\times\Theta\to\mathbb{R}^p$ satisfy the
partial differential equation (PDE),
\[
F\left(x, t,\frac{\partial u}{\partial t},\frac{\partial u}{\partial
x_1},\frac{\partial u}{\partial x_2},\frac{\partial^2 u}{\partial
x_1\partial x_2},\ldots, u, \theta\right) = 0, \quad (x,t)\in
\mathcal{D}\cup\partial\mathcal{D},
\]
and constraints at the boundary $\partial\mathcal{D}$ of $\mathcal{D}$.
When states evolve with respect to a single variable, such as time, the
model simplifies to an ordinary differential equation (ODE). The
explicit solution, denoted $u^*: \mathcal{D}\times\Theta\to\mathbb
{R}^p$, of a differential equation problem over spatio-temporal
locations, $(x,t)\in\mathcal{D}$, is used to explore system dynamics,
design experiments, or extrapolate to expensive or dangerous
circumstances. Therefore, increasing attention is being paid to the
challenges associated with quantifying uncertainty for systems defined
by such models, and in particular those based on mathematical
descriptions of complex phenomena such as the weather, ocean currents,
ice sheet flow, and cellular protein transport \citep
{GhanemSpanos2003,KaipioEtAl2004,HuttunenKaipio2007,Marzouk2007,Marzouk2009,Stuart2010}.

The main challenge of working with differential equation models, from
both mathematical and statistical perspectives, is that solutions are
generally not available in closed form. When this occurs, prior
exploration of the mathematical model, $u^*(x, t, \theta)$, cannot be
performed directly. This issue is dealt with throughout the literature
by replacing the exact solution with an $N$ dimensional approximate
solution, $\hat{u}^N(x, t,\theta)$, obtained using numerical
techniques over a size $N$ grid partitioning of the domain $\mathcal
{D}$. Inference and prediction based on $\hat{u}^N(x, t,\theta)$ then
inform the reasoning process when, for example, assessing financial
risk in deciding on oil field bore configurations, or forming
government policy in response to extreme weather events.

Limited computation and coarse mesh size are contributors to numerical
error which, if assumed negligible can lead to serious misspecification
for these highly nonlinear models. The study of how uncertainty
propagates through a mathematical model is known as the \textit
{forward problem}.
Numerical error analysis provides local and global discretization error
bounds that are characterized point-wise and relate to the asymptotic
behaviour of the deterministic approximation of the model. However,
accounting for this type of {\em verification error} for the purpose of
model inference has proven to be a difficult open problem \citep
{Obercampf2010}, causing discretization uncertainty to be often ignored
in practice. For complex, large-scale models, maintaining a mesh size
that is fine enough to ignore discretization uncertainty demands
prohibitive computational cost \citep{ArridgeEtAl2006}, so that a
coarse mesh is the only feasible choice despite associated model
uncertainty \citep{Agora2014}. Furthermore, some ODE and PDE models
exhibit highly structured but seemingly unpredictable behaviour.
Although the long term behaviour of such systems is entrenched in the
initial states, the presence of numerical discretization error
introduces small perturbations which eventually lead to exponential
divergence from the exact solution. Consequently, information about
initial states rapidly decays as system solution evolves in time \citep
{Berliner1991}.

The {\it calibration} or {\it statistical inverse problem} concerns the
uncertainty in unknown model inputs, $\theta$, given measurements of
the model states
\citep{Bock1983,IonidesEtAl2006, Dowd2007,RamsayEtAl2007,
XueEtAl2010,Brunel2008,
LiangWu2008,calderhead2011statistical,CampbellSteele2011,
GugushviliKlaassen2012,CampbellChkrebtii2013, XunEtAl2014}. However,
inference for differential equation models lacking a closed form
solution has mainly proceeded under the unverified assumption that
numerical error is negligible over the entire parameter space. From the
perspective of \cite{KennedyOhagan2001}, the discrepancy between the
numerical approximation, $\hat{u}^N$, and the unobservable
mathematical model, ${u}^*$, at the observation locations $({\bf
x},{\bf t})\in\mathcal{D}^T$ would be modelled by the function,
$\delta({\bf x}, {\bf t})= {u}^*({\bf x},{\bf t}, \theta)-\hat
{u}^N({\bf x},{\bf t}, \theta)$. Estimation of $\delta({\bf x},{\bf
t})$ requires observations $y({\bf x},{\bf t})$ from $ u^*({\bf x},{\bf
t}, \theta)$. However, magnitude and structure of discretization error
change with $\theta$, while $\delta({\bf x},{\bf t})$ is agnostic to
the specific form of the underlying mathematical model, treating it as
a ``black box''. Consequently exploring the model uncertainty for given
values of $\theta$ requires observations for each parameter setting or
the assumption that $\delta({\bf x}, {\bf t})$ is independent of
$\theta$. Alternatively, in the discussion of \cite
{KennedyOhagan2001}, H. Wynn argues that the sensitivity equations and
other underlying mathematical structures that govern the numerical
approximation could be included in the overall uncertainty analysis.

\subsection{Contributions and organization of the paper}

This paper develops a Bayesian formalism for modelling discretization
uncertainty within the forward problem, and explores its contribution
to posterior uncertainty in the model parameters.
We substantially expand and formalize ideas first described by \citet
{Skilling1991} and develop a general class of probability models for
solution uncertainty for differential equations in the so-called
explicit form,
\begin{align}
\mathcal{A}=f(x,t,\mathcal{B},\theta), \quad\; (x,t)\in\mathcal
{D}\cup\partial\mathcal{D}, \label{eqn:vecfield}
\end{align}
where $\mathcal{A}$ and $\mathcal{B}$ are subsets of the
deterministic solution $u=u(x,t, \theta)$ and its partial derivatives,
and the function $f$ is Lipschitz continuous in $\mathcal{B}$.
Although (\ref{eqn:vecfield}) is very general, Sections \ref
{sec:priors} and \ref{sec:probalgorithm} consider ordinary
differential equations for expositional simplicity, where $\mathcal
{A}=du(t, \theta)/dt:=u_t$ and $\mathcal{B}=u(t, \theta):=u$ for
$t\in[0, L], L>0$,
i.e. $u_t = f(t,u,\theta)$, with a given initial condition. Sections
\ref{sec:Inverse} and \ref{sec:FsimAndInference} extend this to the
generality of (\ref{eqn:vecfield}), showcasing a wide class of ODEs
and PDEs, such as mixed boundary value problems with multiple
solutions, delay initial function problems, and chaotic differential equations.

Section \ref{sec:priors} considers fixed model parameters $\theta$,
hyperparameters $\Psi$, and discretization grid of size $N$. Prior
uncertainty about the solution $u$ and its derivative $u_t$ is encoded
in a probability model $[u , u_t \mid\theta,\Psi]$ on a space of
functions satisfying the initial condition. Section \ref
{sec:probalgorithm} describes a sequential model interrogation method
that consists of (i) drawing a sample $u^{n-1}$ from the marginal $[u
\mid \mbox{f}_1,\ldots, \mbox{f}_{n-1},\theta,\Psi]$, and (ii)
using this to compute \textit{model interrogations} $\mbox{f}_n
\equiv f(t_n, u^{n-1},\theta)$ at grid locations $t_n\in[0, L], L>0$.
Conditioning on these incrementally obtained interrogations defines the
sequence of models $[u , u_t\mid \mbox{f}_1,\theta,\Psi],\ldots,
[u , u_t\mid \mbox{f}_1,\ldots, \mbox{f}_N,\theta,\Psi]$. In
Section \ref{sec:rate} we marginalize over ${\bf f} = (\mbox
{f}_1,\ldots,\mbox{f}_N)$ to obtain the probability model, or \textit
{probabilistic solution},
%
\begin{equation}
\left[u \mid\theta, \Psi, N\right] = \smallint\smallint\left
[u,u_t , {\bf f} \mid\theta, \Psi, N\right] \, \mbox{d}u_t \,
\mbox{d}{\bf f},
\label{eqn:posterior_solution}
\end{equation}
which contracts to the true deterministic solution $u$ as $N\to\infty
$. Computational complexity is shown in Section \ref{sec:complexity}
to be of the same order as comparably sampled explicit numerical
solvers, and scalability is illustrated in Section \ref
{sec:FsimAndInference} for the chaotic Navier--Stokes system involving
the solution of over 16,000 coupled stiff ordinary differential
equations. The probabilistic\vadjust{\eject} solution defines a trade-off between
discretization uncertainty and computational cost. To illustrate this
point, Figure \ref{fig:sinusoid100} compares the exact, but \textit{a
priori} unknown solution with sample trajectories from the proposed
probability model (\ref{eqn:posterior_solution}) and a first order
numerical scheme obtained over progressively refined discretization
grids. The exact solution lies within a high posterior density region
of (\ref{eqn:posterior_solution}), and as the grid is refined yielding
more information about the model, knowledge of the solution increases,
concentrating to the exact solution. Strong posterior non-stationarity
of the error structure (\ref{eqn:posterior_solution}) is illustrated
in Figure \ref{fig:lorenzerrors} for the three states of a system
whose trajectory is restricted around a chaotic attractor (details are
provided in Section \ref{sec:lorenz}).

\begin{figure}
\includegraphics{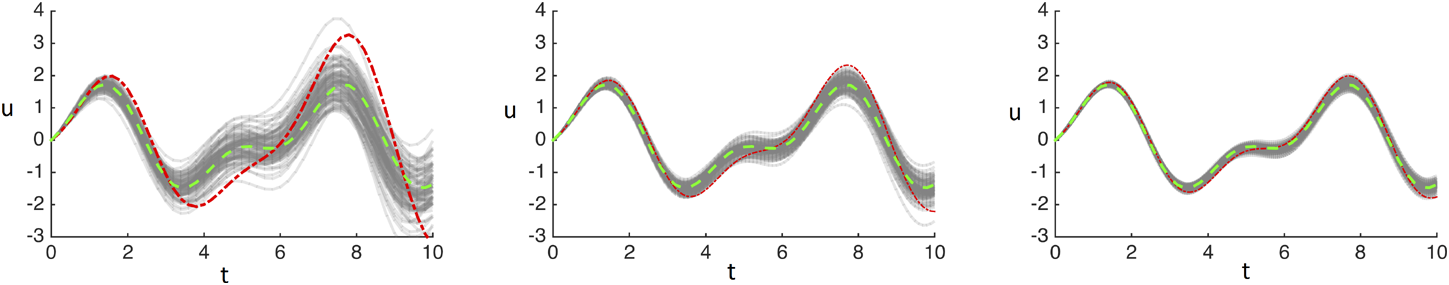}
\caption{Exact solution (green line) of ODE (\ref{eqn:toyIVPsystem});
100 draws (grey lines) from $p\{u({\bf t}) \mid\Psi, N \}$
over a dense grid ${\bf t}$; first order explicit Euler approximation
(red line), obtained using equally-spaced domain partitions of sizes
$N=50, 100, 200$ (left to right).}\label{fig:sinusoid100}
\end{figure}

\begin{figure}[t!]
\includegraphics[scale=0.58]{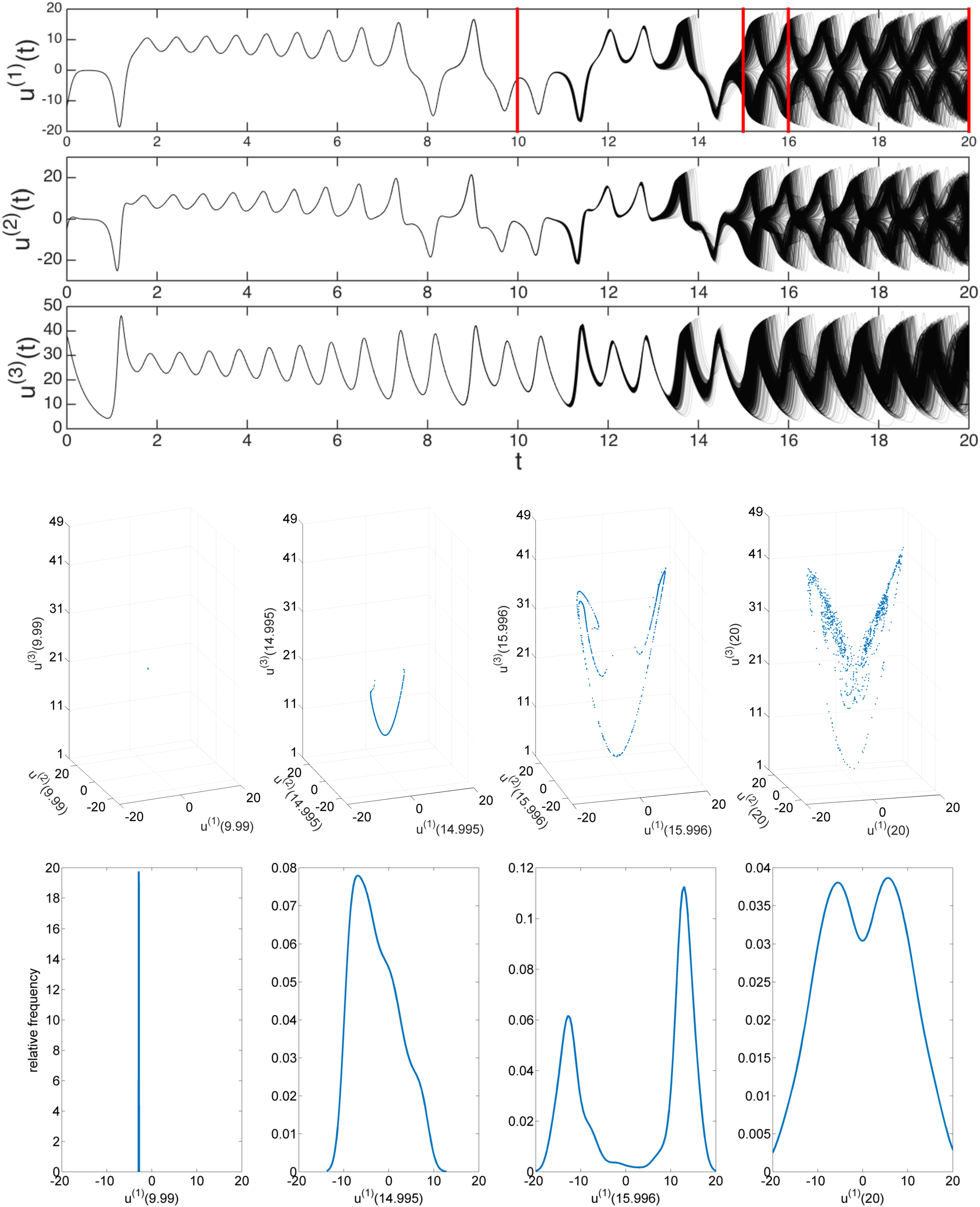}\vspace*{-3pt}
\caption{1000 draws from (\ref{eqn:posterior_solution}) for the
Lorenz63 model with fixed initial states and model parameters in the chaotic regime. Red
vertical lines in $u^{(1)}(t)$ correspond to four time points used in the diagrams in
the last two rows.}
\label{fig:lorenzerrors}\vspace*{-3pt}
\end{figure}

In Section \ref{sec:Inverse} we embed discretization uncertainty in
the inferential hierarchy,
\begin{align}
[ \theta, u, \Psi, \Sigma\mid\mathbf{y}, N ] \; \propto\;
\; \underbrace{[ \mathbf{y} \mid u , \theta, \Sigma] }_{\mbox
{{\footnotesize likelihood}}}
\; \underbrace{\left[ u \mid\theta, \Psi, N \right]}_{\mbox
{{\footnotesize solution uncertainty}}}
\; \underbrace{\left[ \theta, \Psi, \Sigma \right] }_{\mbox
{{\footnotesize prior}}},
\label{eqn:posterior_probabilistic}
\end{align}
where $\Sigma$ includes parameters of the observation process, such as
the covariance structure of the error terms. We adopt a forward
simulation approach for the middle layer within a Markov chain Monte
Carlo (MCMC) algorithm to sample from this posterior distribution. The
importance of explicitly modelling discretization uncertainty within
the inverse problem is illustrated in Section \ref
{subsec:jakstatexample} by the nonlinear posterior dependence between
solver hyperparameters $\Psi$ and differential equation model
parameters $\theta$. This approach provides a useful exploratory tool
to diagnose the presence of discretization effects. The problem of
posterior under-coverage resulting from ignoring discretization induced
solution uncertainty in (\ref{eqn:posterior_probabilistic}) is
illustrated in Section \ref{sec:HeatPDE}.

Discussion of various extensions and implications for the emerging
field of probabilistic numerics is provided in Section \ref
{sec:discussion}. Algorithmic and theoretical details are described in
the Supplement \citep{supplement1017}.  
MATLAB implementations of all examples are provided at
\url{github.com/ochkrebtii/uqdes}.

\subsection{Notation}

We use of the mathematical framework for defining numerical problems on
function spaces developed through the work of \cite{Stuart2010}, \cite
{KaipioSomersalo2007}, \cite{OHagan1992}, and \cite{Skilling1991}. We
denote partial derivatives using subscripts to indicate the variable
with respect to which a partial derivative is taken. For example
$u_{xx} := \tfrac{\partial^2}{\partial x^2}u$ refers to the second
partial derivative of $u$ with respect to $x$. Superscripts in
parentheses denote elements of a vector, for example, $u^{(k)}$ is the
$k$th element of the $P$-dimensional vector $u$. The methodology
proposed in this paper is based on a sequentially updated probabilistic
solution, consequently we use superscripts without parentheses, e.g.,
$m^k$, to refer to $k$-times updated functions. Superscripts on random
variables, e.g., $u^k$, represent a sample from its $k$-times updated
distribution. For notational simplicity, we omit the dependence of the
differential equation solution $u$, and derivative $u_t$ on $\theta$,
$\Psi$, $N$, and $x$ and/or $t$ when the context makes this dependence clear.

\section{Prior model for a fixed but unknown solution}\label{sec:priors}

This section is concerned with probability modelling of initial beliefs
about the unknown function, $u^*:\mathcal{D}\times\Theta\to\mathbb
{R}^p$, satisfying an ODE initial value problem of the form,
\begin{align}
u_t = f\left(t,u,\theta\right), \quad t \in[0,L], L>0, \label{eqn:odeivp}
\end{align}
with fixed initial value $u(0)=u^*(0)$. We describe how to incorporate
boundary and regularity assumptions into a flexible Gaussian process
(GP) prior model \citep{Rasmussen2006,Stuart2010} for the unknown
solution. Prior beliefs are updated in Section \ref{sec:probalgorithm}
by conditioning on information about the differential equation model
obtained over a size $N$ partition of the domain $\mathcal{D}$
according to a chosen sampling scheme.

\subsection{GP prior on the solution of an ODE initial value problem}\label{sec:ivpprior}

GPs are a flexible class of stochastic processes that can be defined
jointly with their derivatives in a consistent way through the
specification of an appropriate covariance operator \citep
{SolakMurray2003}. We model uncertainty jointly for the time derivative
in (\ref{eqn:odeivp}) and its solution, $(u_t,u)$, using a GP prior
with covariance structure, $(C_t^0, C^0)$, and mean function, $(m_t^0,
m^0)$. The marginal means must satisfy the constraint $\int
_{0}^{t}m_t^0(z)dz = m^0(t)$. The covariance structure is defined as a
flexible convolution of kernels \citep[see,
e.g.,][]{Higdon1998,Higdon2002}. A covariance operator for the
derivative $u_t$ evaluated between locations $t_j$ and $t_k$ is
obtained by convolving a square integrable kernel $R_\lambda: \mathbb
{R} \times\mathbb{R} \to\mathbb{R}$ with itself,
\[
C_t^0(t_j,t_k)= \alpha^{-1}\smallint_{\mathbb{R}}R_\lambda(t_j,
z)R_\lambda(t_k, z) \mathrm{d}z,
\]
parameterized by the length-scale $\lambda>0$ and prior precision
$\alpha> 0$, to be discussed in Section \ref{sec:hyperparms}. An
example is the infinitely differentiable kernel, $R_\lambda(t_j,t_k)
=\break
\exp\{ -{(t_j-t_k)^2}/2\lambda^{2}\}$, which decreases
exponentially with squared distance between $t_j$ and $t_k$ weighted by
the length-scale $\lambda$. Its convolution yields the \textit
{squared exponential covariance} $C_t^0(t_j,t_k)= \sqrt{\pi}\alpha
^{-1}\lambda\exp\{ -{(t_j - t_k)^2/4\lambda^2}\}$. An
example of a kernel with bounded support is $R_\lambda(t_j,t_k) =
\mbox{1}_{(t_j-\lambda,t_j+\lambda)}(t_k)$, which yields the
piecewise linear \textit{uniform covariance} $C_t^0(t_j,t_k)= \{ \mbox
{min}(t_j,t_k) - \mbox{max}(t_j,t_k) +2\lambda\} \; \mbox
{1}_{(0,\infty)}\{ \mbox{min}(t_j,t_k)-\mbox{max}(t_j,t_k) +
2\lambda\}$. Details are provided in Supplement D.4.
A marginal covariance operator, $C^0$, on the state is obtained by
integrating $C_t^0$ with respect to evaluation points $t_j$ and $t_k$
or, equivalently, by convolving the integrated kernel $Q_\lambda
(t_j,t_k) = \int_{0}^{t_j} R_\lambda(z,t_k)\mbox{d}z$ with itself:
\[
C^0(t_j,t_k) = \alpha^{-1}\smallint_{\mathbb{R}}Q_\lambda(t_j,
z)Q_\lambda(t_k, z) \mathrm{d}z.
\]
Defining the covariance over the state by integration ensures that
$C^0(0,0) = 0$, which enforces the boundary condition, $u(0) = m^0(0)$
(see Supplement D.1; 
 an alternative way to enforce
the boundary constraint is to obtain the derivative process $u_t$ by
differentiation of $u$ and condition the joint prior for $(u_t,u)$ on
the exact initial state, $u^*(0)$). This introduces anisotropy over the
state that is consistent with exact knowledge of the solution at the
initial boundary and increasing uncertainty thereafter. The
cross-covariance terms are defined analogously as $\int_{0}^{t_j}
C_t^0(z,t_k) \mbox{d}z = \alpha^{-1}\smallint_{\mathbb{R}}Q_\lambda
(t_j, z)R_\lambda(t_k, z) \mathrm{d}z$ and $\int_{0}^{t_k}
C_t^0(t_j,z) \mbox{d}z = \alpha^{-1}\smallint_{\mathbb{R}}R_\lambda
(t_j, z)Q_\lambda(t_k, z) \mathrm{d}z$, where each is the adjoint of
the other.

The initial joint GP prior probability distribution for the solution at
a vector of evaluation times ${\bf t_k}$ and its time derivative at a
possibly different vector of evaluation times ${\bf t_j}$, given fixed
hyperparameters $\Psi= (m^0_t,m^0,\alpha,\lambda)$ is,
\begin{align}
\left[
\begin{matrix} u_t({\bf t_j})\\ u({\bf t_k})
\end{matrix}
\right] \sim
{\cal{GP}}\left(\left[
\begin{matrix} m^0_t({\bf t_j})\\ m^0({\bf t_k})
\end{matrix}
\right]
, \left[
\begin{matrix} C_t^0({\bf t_j},{\bf t_j}) & \int_{0}^{{\bf t_k}}
C_t^0({\bf t_j},{\bf z}) \mbox{d}{\bf z} \\ \int_{0}^{{\bf t_k}}
C_t^0({\bf z},{\bf t_j}) \mbox{d}{\bf z} & C^0({\bf t_k},{\bf t_k})
\end{matrix}
\right]\right).
\label{eqn:jointprior}
\end{align}
This prior modelling strategy is straightforwardly generalized to
explicit initial value problems of the form (\ref{eqn:vecfield}) by
defining the prior jointly on the state and the required higher order
derivatives.

As a simple example, consider the second order initial value ODE problem,
\begin{align}
\left\{
\begin{array}{l l l}
u_{tt}(t) = \sin(2t) - u, \quad t \in[0,10], \\
u_t(0) =0, \; {u}(0) = -1.
\end{array}
\right.
\label{eqn:toyIVPsystem}
\end{align}
Its exact solution, $u^*(t) = \{ -4\cos(t) + 2\sin(t) - \sin
(2t) + \cos(t) \}/(4-1)$, is assumed unknown
\textit{a priori}. The first column of Figure \ref{fig:Algo1_Illustration}
shows five draws from the marginal densities of the prior $[
u_t({\bf t}), u({\bf t})]$ 
 defined over a fine grid ${\bf t}\in[0,10]^T$ using a squared exponential covariance with $(\lambda, \alpha) = (0.8, 5)$. Section
\ref{sec:probalgorithm} describes a sequential updating procedure of
the prior model to obtain the probabilistic solution for this example.

\section{Updating prior beliefs for fixed but unknown ODE solution}\label{sec:probalgorithm}

The prior model (\ref{eqn:jointprior}) on the unknown solution and its
derivative will be updated conditionally on model information collected
over the ordered partition ${\bf s}=(s_1,\ldots,s_N)$ of the domain
$[0, L]$. Define a \textit{model interrogation} at time $s_n$ as the
output $\mbox{f}_n \equiv f\{s_n, u^{n-1}(s_n), \theta\}
$ from the right hand side of equation (\ref{eqn:odeivp}) evaluated at
a sample $u^{n-1}(s_n)$ from the marginal prior $[u(s_n) \mid\mbox
{f}_{n-1}]$. Model interrogations are generated sequentially based on
state samples drawn from progressively refined prior models in order to
accommodate fast changing dynamics. Because the ODE model (\ref
{eqn:odeivp}) defines the solution implicitly, we are unable to measure
the exact state or derivative, and therefore both the model
interrogations and their error structure will take into account
increasing uncertainty with time.

This section begins by expanding on these function estimation steps
with full details of the Bayesian updating algorithm. We illustrate the
forward solution algorithm on the simple example (\ref
{eqn:toyIVPsystem}) with known closed form solutions before considering
more realistic systems. Section \ref{sec:complexity} examines
modelling choices for which this procedure can attain computational
complexity of $O(N)$, while Section \ref{sec:rate} establishes the
consistency and first order convergence properties of this approach.
Choice of hyperparameters is addressed in Section \ref
{sec:hyperparms}. Specific extensions to the full generality of (\ref
{eqn:vecfield}) are presented in Sections \ref{sec:Inverse} and  \ref{sec:FsimAndInference}.

\subsection{Model updating based on interrogations from sampled states}\label{sub:update}

This section describes a sequential sampling scheme to interrogate the
model after each update. Model interrogations occur at pre-determined
discretization grid locations ${\bf s}=(s_1,\ldots,s_N)$, while the
resulting sequence of posterior solution models can be computed at a
possibly different vector of time points, ${\bf t}$. There are many
ways to generalize this basic one-step-ahead sampling scheme, as
discussed in Section \ref{sec:discussion}.

\begin{figure}[t!]
\includegraphics[scale=1.05]{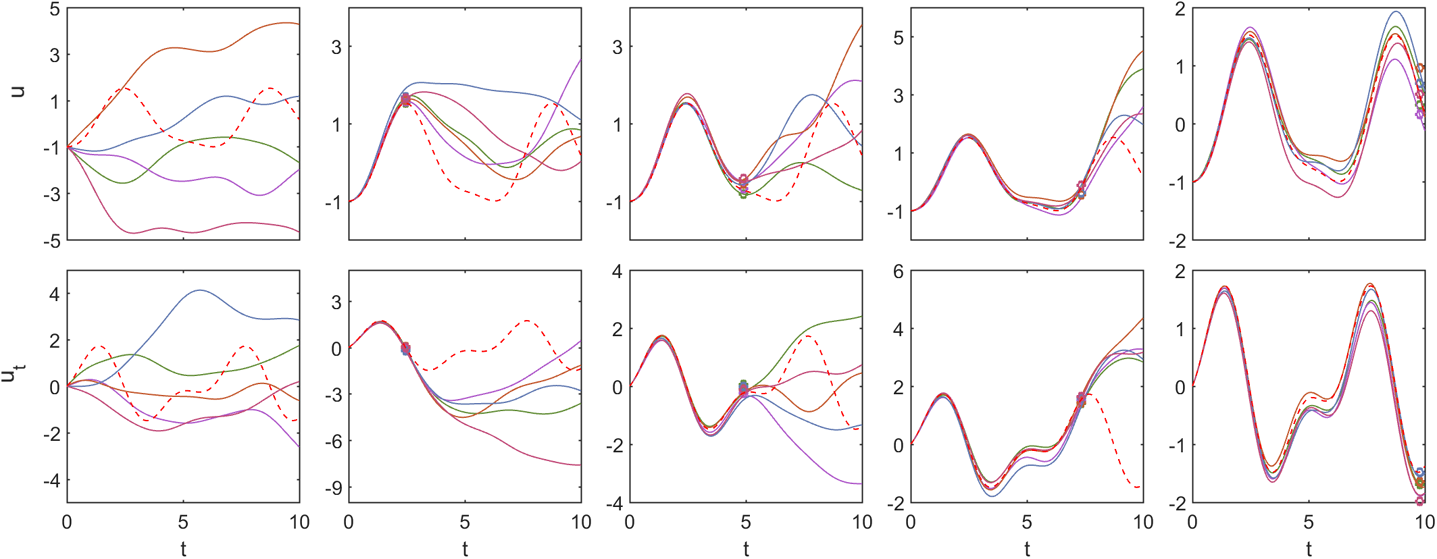}
\caption{Five draws (solid lines) from the marginal densities of (\ref{eqn:posterior_solution})
over the state (first row)
and its derivative (second row) for ODE (\ref{eqn:toyIVPsystem})
after $n = 0, 12, 24, 36, 48$
iterations (from left to right) of Algorithm \ref{alg:ivpsolution}
with discretization grid of size $N=50$; step-ahead samples $u^n$ and $\mbox{f}_n$ are shown as circles;
exact solution is shown as a dotted red line.}
\label{fig:Algo1_Illustration}
\end{figure}

\subsubsection{First update}

The derivative of the exact solution on the boundary, $s_1=0$, can be
obtained exactly by evaluating the right hand side of (\ref
{eqn:odeivp}) at the known initial condition $u^*(s_1)$ as,
\[
\mbox{f}_1 \equiv f\left\{ s_1, u^*(s_1), \theta\right\} = u^*_t(s_1).
\]
The first update of the prior probability model (\ref{eqn:jointprior})
is performed by conditioning on the exact derivative $\mbox{f}_1$.
This yields the joint conditional predictive probability distribution
at evaluation times ${\bf t_j}$ and ${\bf t_k}$,
\begin{align}
\left[
\begin{matrix} u_t({\bf t_j})\\ u({\bf t_k})
\end{matrix}
\; \bigg| \;
\begin{matrix} \mbox{f}_1
\end{matrix}
\right]
\sim
{\cal{GP}}\left(\left[
\begin{matrix} m^1_t({\bf t_j})\\ m^1({\bf t_k})
\end{matrix}
\right]
, \left[
\begin{matrix}
C^1_t({\bf t_j},{\bf t_j})
& \int_{0}^{{\bf t_k}} C^1_t({\bf t_j},{\bf z})\mbox{d}{\bf z}\\
\int_{0}^{{\bf t_k}} C^1_t({\bf z},{\bf t_j}) \mbox{d}{\bf z}
& C^1({\bf t_k}, {\bf t_k})
\end{matrix}
\right]\right),
\label{eqn:jointprior_step1}
\end{align}
\noindent where marginal means and covariances at the vectors of time
${\bf t_j}$ and ${\bf t_k}$ are given by:
\begin{align*}
m^1_t(\mathbf{t_j}) &= m_t^0(\mathbf{t_j}) + C_t^0(\mathbf{t_j},
s_1)C_t^0(s_1,s_1)^{-1} \left\{ \mbox{f}_1 - m^0_t(s_1) \right\},\\
m^1(\mathbf{t_k})
&= m^{0}(\mathbf{t_k}) + C_t^0(s_1,s_1)^{-1}
\smallint_{0}^{\mathbf{t_k}} C_t^{o}({\bf z},s_1) \mbox{d}{\bf z} \,
\left\{\mbox{f}_1 - m_t^{0}(s_1) \right\},\\
C^1_t(\mathbf{t_j},\mathbf{t_j}) &= C_t^0(\mathbf{t_j}, \mathbf
{t_j}) - C_t^0(\mathbf{t_j}, s_1) C_t^0(s_1, s_1)^{-1}
C_t^0(s_1,\mathbf{t_j}),\\
C^1(\mathbf{t_k},\mathbf{t_k}) & = C^{0}(\mathbf{t_k},\mathbf{t_k})
- \left\{ \smallint_{0}^{\mathbf{t_k}} C_t^{0}({\bf z},s_1)\mbox
{d}{\bf z} \,\right\} C_t^0(s_1,s_1)^{-1} \, \left\{\smallint
_{0}^{\mathbf{t_k}} C_t^{0}({\bf z},s_1)\mbox{d}{\bf z} \right\}
^\top.
\end{align*}
Closed forms for means and covariances are obtained by integration
(along with cross-covariances, Supplement D.1). 
 We
refer to this predictive probability measure as the \textit{first
update} of the algorithm and use it as the prior for the next iteration
of the algorithm.

\subsubsection{Second update}

The derivative of the exact solution is unknown beyond $s_1=0$. We
therefore introduce the random variable $\mbox{f}_2$, representing a
model interrogation at the subsequent time step, $s_2>0$. A realization
of $\mbox{f}_2$ is obtained by drawing a sample $u^1(s_2)$ at time
$s_2$ from the marginal predictive distribution of (\ref
{eqn:jointprior_step1}),
\begin{equation}
u(s_2)\mid\mbox{f}_1 \sim\mathcal{N} \left(
m^1(s_2),C^1(s_2,s_2)\right),\nonumber
\end{equation}
and then applying the transformation from the right hand side of equation (\ref{eqn:odeivp}),
\[
\mbox{f}_2 \equiv f\left\{s_2, u^1(s_2), \theta\right\}.
\]
The uncertainty in $u(s_2)\mid\mbox{f}_1$ means that $\mbox{f}_2$ is
not guaranteed to equal to the exact derivative at time $s_2$ (the
function $f$ is a differential operator only for the input $u=u^*$).
Indeed, without the exact solution, the probability distribution of
$\mbox{f}_2$ is unavailable. As with any model for a data generating
process, the likelihood (error model) should be elicited on a
case-by-case basis. Here we provide general guidance on constructing
this likelihood in the absence of additional expert information about
the error model for $\mbox{f}_2$, and all subsequent interrogations
$\mbox{f}_n$. In general, we expect that (i) the discrepancy between
$u_t(t)$ and $f\{t,u(t),\theta\}, t\in[0,L]$ is zero
when $u$ is the exact solution; and (ii) the function $f$ is smooth in
the second argument, therefore discrepancy decreases smoothly with
decreasing uncertainty in the derivative of the sampled state
realization $u^{1}(s_2)$.
These facts suggest a spherically symmetric error model for $\mbox
{f}_2$ with a single mode at $\mbox{f}_2 = u_t(s_2)$ and dispersed as
an increasing function of distance from $s_1=0$. For example, we may
consider the Gaussian error model,
\begin{align}
\mbox{f}_2\mid u_t, \mbox{f}_1 \sim\mathcal{N}\left(u_t(s_2),
r_1(s_2) \right),
\label{eqn:error_model_step_2}
\end{align}
with mean $u_t(s_2)$ and variance $r_1(s_2) = C_t^1(s_2, s_2)$. A
degenerate special case of this error model is discussed at the end of
this section. We may now update (\ref{eqn:jointprior_step1}) by
conditioning on the model interrogation $\mbox{f}_2$ under the error
model (\ref{eqn:error_model_step_2}) to obtain the predictive
probability distribution at the vectors of evaluation times $\mathbf
{t_j}$ and $\mathbf{t_k}$,
\begin{align}
\left[
\begin{matrix} u_t(\mathbf{t_j})\\ u(\mathbf{t_k})
\end{matrix}
\; \bigg| \;
\begin{matrix}
\mbox{f}_2,\mbox{f}_1
\end{matrix}
\right] \sim
{\cal{GP}}\left(\left[
\begin{matrix} m^2_t(\mathbf{t_j})\\ m^2(\mathbf{t_k})
\end{matrix}
\right]
, \left[
\begin{matrix}
C^2_t(\mathbf{t_j}, \mathbf{t_k})
& \int_{0}^{\mathbf{t_k}} C^2_t(\mathbf{t_j},{\bf z})\mbox{d}{\bf
z}\\
\int_{0}^{\mathbf{t_j}} C^2_t({\bf z},\mathbf{t_k}) \mbox{d}{\bf z}
& C^2(\mathbf{t_k}, \mathbf{t_k})
\end{matrix}
\right]\right).
\label{eqn:jointprior_step3}
\end{align}
Defining $g_2 = C_t^{1}(s_2,s_2) + r_1(s_2)$, the marginal means and
covariances at the vectors of evaluation times ${\bf t_j}$ and ${\bf
t_k}$ are given by,
\begin{align*}
m_t^2(\mathbf{t_j}) &= m_t^{1}(\mathbf{t_j}) + C_t^{1}(\mathbf
{t_j},s_2)\,g_2^{-1} \left\{\mbox{f}_2 - m_t^{1}(s_2) \right\},\\
m^2(\mathbf{t_k}) &= m^{1}(\mathbf{t_k}) +
\smallint_{0}^{\mathbf{t_k}} C_t^{1}({\bf z},s_n) \mbox{d}{\bf z} \,
g_2^{-1} \, \left\{\mbox{f}_2 - m_t^{1}(s_2) \right\},\\
C_t^2(\mathbf{t_j},\mathbf{t_j}) & = C_t^{1}(\mathbf{t_j},\mathbf
{t_j}) - C_t^{1}(\mathbf{t_j},s_2)\,g_2^{-1} C_t^{1}(s_2,\mathbf
{t_j}), \\
C^2(\mathbf{t_k},\mathbf{t_k}) & = C^{1}(\mathbf{t_k},\mathbf{t_k})
- \left\{ \smallint_{0}^{\mathbf{t_k}} C_t^{1}({\bf z},s_2)\mbox
{d}{\bf z} \,\right\} \,g_2^{-1}\, \left\{\smallint_{0}^{\mathbf
{t_k}} C_t^{1}({\bf z},s_2)\mbox{d}{\bf z} \right\}^\top,
\end{align*}
where state mean, covariance, and derivative cross-covariances can be
obtained analytically via integration (see Supplement D.1).

\subsubsection{Subsequent updates}\label{seq:subsequent}

Subsequent updates, $2<n\leq N$, begin by drawing a sample
$u^{n-1}(s_{n})$ at time $s_{n}$ (illustrated by circles in the top row of Figure \ref
{fig:Algo1_Illustration}) from the marginal predictive posterior
$[u(s_n) \mid\mbox{f}_{n-1},\ldots,\mbox{f}_1 ]$, {whose mean
and covariance are given in Algorithm \ref{alg:ivpsolution}}. A
realization of the model interrogation $\mbox{f}_n$ is constructed by
applying the model transformation (\ref{eqn:odeivp}) to
$u^{n-1}(s_{n})$. The update is performed relative to the prior
$[u_t, u \mid\mbox{f}_{n-1},\ldots,\mbox{f}_1 ]$, by
assimilating $\mbox{f}_n$ under the Gaussian error model,
\begin{align}
\mbox{f}_n \mid u_t(s_n), \mbox{f}_{n-1},\ldots, \mbox{f}_1 \sim
\mathcal{N} \left\{u_t(s_n), r_{n-1}(s_n) \right\}, \quad1\leq n
\leq N,
\label{eqn:error_model_step_n}
\end{align}
where, e.g., $r_{n-1}(s_n)=C_t^{n-1}(s_n,s_n)$, to obtain the
predictive probability distribution at the vectors of evaluation times
$\mathbf{t_j}$ and $\mathbf{t_k}$,
\begin{align}
\left[
\begin{matrix} u_t(\mathbf{t_j})\\ u(\mathbf{t_k})
\end{matrix}
\; \bigg| \;
\begin{matrix}
\mbox{f}_n,\ldots,\mbox{f}_1
\end{matrix}
\right] \sim
{\cal{GP}}\left(\left[
\begin{matrix} m^n_t(\mathbf{t_j})\\ m^n(\mathbf{t_k})
\end{matrix}
\right]
, \left[
\begin{matrix}
C^n_t(\mathbf{t_j}, \mathbf{t_j})
& \int_{0}^{\mathbf{t_k}} C^n_t(\mathbf{t_j},{\bf z})\mbox{d}{\bf
z}\\
\int_{0}^{\mathbf{t_j}} C^n_t({\bf z},\mathbf{t_k}) \mbox{d}{\bf z}
& C^n(\mathbf{t_k}, \mathbf{t_k})
\end{matrix}
\right]\right).
\label{eqn:jointprior_step3n}
\end{align}
Defining $g_{n} = C_t^{n-1}(s_n,s_n) + r_{n-1}(s_n)$, the resulting
joint predictive probability distribution is Gaussian with marginal
means and covariances:
\begin{align}
m_t^n({\bf t_j})
&= m_t^{n-1}({\bf t_j}) +
C_t^{n-1}({\bf t_j},s_n)\,g_n^{-1}\left\{ \mbox{f}_n - m_t^{n-1}(s_n)
\right\} ,\nonumber\\
m^n({\bf t_k})
&= m^{n-1}({\bf t_k}) + g_n^{-1}
\smallint_{0}^{{\bf t_k}} C_t^{n-1}({\bf z},s_n) \mbox{d}{\bf z} \,
\left\{\mbox{f}_n - m_t^{n-1}(s_n) \right\},\nonumber\\
C_t^n({\bf t_j},{\bf t_k}) & = C_t^{n-1}({\bf t_j},{\bf t_k}) -
C_t^{n-1}({\bf t_j},s_n)\, g_n^{-1}C_t^{n-1}(s_n,{\bf t_k}),\nonumber\\
C^n({\bf t_j},{\bf t_k}) & = C^{n-1}({\bf t_j},{\bf t_k}) - g_n^{-1}
\smallint_{0}^{{\bf t_j}} C_t^{n-1}({\bf z},s_n)\mbox{d}{\bf z} \,
\left\{\smallint_{0}^{{\bf t_k}} C_t^{n-1}({\bf z},s_n)\mbox{d}{\bf
z} \right\}^\top,\nonumber
\end{align}
and with state mean, covariances, and derivative cross-covariances
again obtained via integration (see Algorithm \ref{alg:ivpsolution}
and Supplement D.1).  
Here the state and derivative,
$u$ and $u_t$ respectively, may be of arbitrary dimension. Under a
bounded support covariance, $C^n({\bf t_j},{\bf t_k})$ and $C_t^n({\bf
t_j},{\bf t_k})$ become sparse, band diagonal matrices.

After $N$ updates have been performed, we obtain a joint posterior
probability distribution $[u_t, u \mid{\bf f} ]$ for the
unknown solution and its derivative, conditional on realizations of one
trajectory of the model interrogation vector $\mbox{\bf f}$. Draws
from the marginal model (\ref{eqn:posterior_solution}) can be obtained
via Monte Carlo, as described in Algorithm \ref{alg:ivpsolution},
yielding realized trajectories from the forward model (\ref
{eqn:posterior_solution}) at desired time locations ${\bf t}$. Five
draws from the joint distribution of (\ref{eqn:posterior_solution})
and the interrogations $\mbox{f}_n$ at iterations $n = 12, 24, 36, 48$
are illustrated in the second through fifth columns of Figure \ref
{fig:Algo1_Illustration} for the second-order ODE initial value problem
(\ref{eqn:toyIVPsystem}). For notational simplicity, we will hereafter
assume that the temporal locations of interest, ${\bf t}$, form a
subset of the discretization grid locations, ${\bf s}$.

\subsubsection{Point mass likelihood on model interrogations}

A special case of this procedure models each interrogation, $\mbox
{f}_n$, as the exact unknown derivative, $u^*_t(s_j)$, regardless of
grid size, $N$. Thus, model interrogations are interpolated in the
derivative space by replacing (\ref{eqn:error_model_step_2}) and (\ref
{eqn:error_model_step_n}) with the point mass density:
\[
P\left(u_t(s_{n}) =\mbox{f}_n \right)=1, \quad1\leq n\leq N.
\]
Because sampled process realizations can still be inconsistent with the
solution, computational implementation of this alternative scheme often
requires using an arbitrary covariance nugget increasing with time, or
the use of a rougher covariance function than the smoothness implied by
the differential equation model. For these reasons we will restrict our
attention to explicitly modelling the interrogation error, e.g. (\ref
{eqn:error_model_step_2}) and (\ref{eqn:error_model_step_n}).

\begin{algorithm}[t]
\caption{For the initial value problem (\ref{eqn:odeivp}), draw one
sample from the forward model (\ref{eqn:posterior_solution}) over
${\bf t} = (t_1,\cdots,t_T)$, given $\Psi$ and discretization grid
${\bf s} = (s_1, \cdots, s_N)$.}
\label{alg:ivpsolution}
\begin{algorithmic}
\STATE At time $s_1 = 0$ initialize the derivative $\mbox{f}_{1}= f \{ s_1, u^*(0), \theta\}$, and define $m^0, m^0_t, C^0, C^0_t$ as above;
\FOR{$n = 1 : N$}
\STATE If $n=1$, set $g_1=C_t^{0}(s_1,s_1)$, otherwise set
$g_n=C_t^{n-1}(s_n,s_n)+r_{n-1}(s_n)$;
\STATE Compute for each modelled system component,
\begin{align*}
m^n({\bf s})
&= m^{n-1}({\bf s}) + g_n^{-1}
\smallint_{0}^{{\bf s}} C_t^{n-1}({\bf z},s_n) \mbox{d}{\bf z} \,
\left\{\mbox{f}_n - m_t^{n-1}(s_n) \right\},\\
m_t^n({\bf s})
&= m_t^{n-1}({\bf s}) + g_n^{-1} \;
C_t^{n-1}({\bf s},s_n) \left\{\mbox{f}_n- m_t^{n-1}(s_n) \right\},\\
C^n({\bf s},{\bf s}) & = C^{n-1}({\bf s},{\bf s}) - g_n^{-1} \smallint
_{0}^{{\bf s}} C_t^{n-1}({\bf z},s_n)\mbox{d}{\bf z} \, \left\{
\smallint_{0}^{{\bf s}} C_t^{n-1}({\bf z},s_n)\mbox{d}{\bf z} \right
\}^\top,\\
C_t^n({\bf s},{\bf s}) & = C_t^{n-1}({\bf s},{\bf s}) - g_n^{-1}
C_t^{n-1}({\bf s},s_n)\, C_t^{n-1}(s_n,{\bf s}),\\
\smallint_{0}^{{\bf s}} C_t^n({\bf z},{\bf s}) \mbox{d}{\bf z} & =
\smallint_{0}^{{\bf s}}C_t^{n-1}({\bf z},{\bf s}) \mbox{d}{\bf z} -
g_n^{-1} \smallint_{0}^{{\bf s}} C_t^{n-1}({\bf z},s_n) \mbox{d}{\bf
z} \, C_t^{n-1}(s_n,{\bf s});
\end{align*}
\IF{$n < N$}
\STATE Sample one-step-ahead realization $u^{n}(s_{n+1})$ from the
predictive distribution on the state for each modelled system component,
\begin{align*}
u^{n}(s_{n+1}) \,{\sim}\, p \left\{ u^n(s_{n+1}) \mid\mbox{f}_n,\ldots
,\mbox{f}_1 \right\}
{=}\, \mathcal{N} \left\{ u^n(s_{n+1}) \mid m^n(s_{n+1}),
C^n(s_{n+1},s_{n+1}) \right\}\!,
\end{align*}
and interrogate the model by computing $\mbox{f}_{n+1} ={f}\{
s_{n+1}, u^{n}(s_{n+1}), \theta\}$;
\ENDIF
\ENDFOR
\STATE Sample and return
$\mathbf{u} = (u^N(t_1),\cdots,u^N(t_T))\sim\mathcal
{GP}\{ m^N({\bf t}), C^N({\bf t},{\bf t})\}$, where ${\bf
t}\subset{\bf s}$.
\end{algorithmic}
\end{algorithm}

\subsection{Computational complexity} \label{sec:complexity}

A reasonable modelling assumption for the state is that temporal
correlation decays to zero with distance. Indeed, most deterministic
numerical methods make the analogous assumption by defining a lag of
$M\geq1$ time steps, beyond which the solution has no direct effect.
This Markovian assumption both adds flexibility for modelling very
fast-changing dynamics and drastically reduces computational cost.
Under a compactly supported covariance structure, Algorithm \ref
{alg:ivpsolution} requires only $O(N)$ operations by allowing
truncation of the weight matrices employed in the mean and covariance
updates. In contrast, when employing a covariance structure with
unbounded support, Algorithm \ref{alg:ivpsolution} attains
computational complexity proportional to the cube of the number of time
steps, since the number of operations can be written as a finite sum
over $1\leq n\leq N$ with each iteration requiring an order $O(n^2)$ operations.

A computational comparison of Algorithm \ref{alg:ivpsolution} with a
similarly sampled explicit first order solver is provided in the Supplement C.  
Although probability models for
the solution attain the same computational scaling, the probabilistic
method is more expensive. However, the model of solution uncertainty
makes up for its added computational cost in practical applications
where discretization grids cannot be refined enough to ensure
negligible discretization effects. Within the inverse problem, the
adaptive estimation of length-scale can substantially increase solver
accuracy (see Figure 10 
in Supplement C). 

\subsection{Convergence} \label{sec:rate}

We now consider the rate at which the stochastic process with
probability distribution (\ref{eqn:posterior_solution}) generated via
Algorithm \ref{alg:ivpsolution} concentrates around the exact solution
of (\ref{eqn:odeivp}). The rate of convergence is as fast as the
reduction of local truncation error for the analogously sampled
one-step explicit Euler method. Proof is provided in the 
Supplement D.2.

\begin{thm}
\label{thm:meanconvergence}
Consider a function $f(t,u(t)): [0,L] \times\mathbb{R} \to\mathbb
{R}$ that is continuous in the first argument and Lipschitz continuous
in the second argument. The stochastic process with probability
distribution (\ref{eqn:posterior_solution}) obtained using Algorithm
\ref{alg:ivpsolution} with $r_n(s) = 0, s\in[0,L]$ and a covariance
kernel satisfying conditions (D.6) and (D.7) 
 in Supplement D.3,  
converges in $L^1$ to the
unique solution satisfying (\ref{eqn:odeivp}) as the maximum
discretization grid step size, $h$, the length-scale, $\lambda$, and
the prior variance, $\alpha^{-1}$, tend to zero. The rate of
convergence is $O(h)$, proportional to the decrease in the maximum step
size, $h$.
\end{thm}

\subsection{The choice of hyperparameters} \label{sec:hyperparms}

Analogously with choosing a numerical solver, the choice of
probabilistic solver hyperparameters must be based on prior knowledge
about the solution, such as its smoothness. Moreover, as with many
nonparametric function estimation problems, asymptotic results on
convergence may be used to guide our choice of covariance
hyperparameters within the forward problem. For example, for the
explicit first order sampling scheme, the result of Theorem \ref
{thm:meanconvergence} suggests setting the prior variance, $1/\alpha$,
and the length-scale, $\lambda$, proportionally to the maximum
distance, $h$, between subsequent discretization grid locations. When
experimental data is available, it also becomes possible to obtain
marginal posterior densities for the hyperparameters as part of the
calibration problem (see Section \ref{sec:Inverse}), allowing data
driven adaptation over the space of model parameters. In contrast,
deterministic numerical differential equation solvers effectively fix
the analogues of these hyperparameters, such as the quadrature rule or
the step number, both within the forward and inverse problems.

\section{Exact Bayesian posterior model inference}\label{sec:Inverse}

The discretization uncertainty model in Section \ref
{sec:probalgorithm} may now be incorporated into the calibration
problem. We wish to infer model parameters $\theta$ defining states
$\mathbf{u} = (u(t_1),\ldots,\break u(t_T))^\top$ from
measurement data $\mathbf{y} = (y(t_1),\ldots,y(t_{T}))^\top$,
via the hierarchical model (\ref{eqn:posterior_probabilistic}),
\begin{align*}
\begin{array}{rrl}
[ \mathbf{y} \mid\mathbf{u}, \theta, \Sigma] & \propto& \rho
\left\{ \mathbf{y} - H \left(\mathbf{u}, \theta\right)\right\},\\
\left[ \mathbf{u} \mid\theta, \Psi, N \right] & = & p\left
(\mathbf{u} \mid\theta, \Psi, N \right),\\
\left[ \theta, \Psi, \Sigma \right] & = & \pi(\theta,\Psi,
\Sigma),
\end{array}
\end{align*}
where the transformation $H$ maps the differential equation solution to
the observation space, $\rho$ is a probability density, $\pi$ is a
prior density, $\Sigma$ defines the observation process, such as the
error structure, and $\theta$ may include unknown initial conditions
or other model components. We describe a Markov chain Monte Carlo
procedure targeting the joint posterior of the state and unknown
parameters conditional on a vector of noisy observations from $[
\mathbf{y} \mid\mathbf{u}, \theta, \Sigma]$.
Section \ref{subsec:jakstatexample} describes a case where states are
indirectly observed through a nonlinear transformation $H$.

\begin{algorithm}[ht!]
\caption{Draw $K$ samples from posterior $p ( \theta, \mathbf
{u}\mid\mathbf{y}, \alpha, \lambda, \Sigma) $ given
observations of the transformed solution of a differential equation
with unknown parameters $\theta$.}
\label{alg:fullyprobinf}
\begin{algorithmic}
\STATE
\STATE Initialize $\theta, \alpha, \lambda\sim\pi(\cdot)$ where
$\pi$ is the prior density;
\STATE Using a probabilistic solver (e.g., Algorithm \ref{alg:ivpsolution})
simulate a vector of realizations, $\mathbf{u}$ conditional on $\theta
, \alpha,\lambda$, over the size N discretization grid;
\FOR{$k = 1:K$}
\STATE Propose $\theta', \alpha', \lambda'\sim q(\cdot\mid \theta
, \alpha, \lambda)$ where $q$ is a proposal density;
\STATE Using a probabilistic solver (e.g., Algorithm \ref{alg:ivpsolution})
simulate a vector of realizations, $\mathbf{u}'$ conditional on
$\theta', \alpha', \lambda'$, over the size N discretization grid;
\STATE Compute the rejection ratio,
\begin{align*}
\rho & =
\frac{p \left( \, \mathbf{y} \mid\mathbf{u}', \theta', \Sigma\,
\right)}{ p \left( \, \mathbf{y} \mid\mathbf{u}, \theta, \Sigma\,
\right)}\; \frac{\pi(\, \theta'\, , \alpha' , \lambda'\,)}{\pi(
\,\theta, \alpha, \lambda\,)}\;\frac{q(\, \theta\, , \alpha,
\lambda\mid \theta' , \alpha', \lambda' \,)}{q(\, \theta'\, ,
\alpha' , \lambda' \mid \theta, \alpha, \lambda\,)} ,
\end{align*}
where the ratio of marginal densities, $p(\mathbf{u} \mid{\bf
f},\theta,\alpha,\lambda)$ to $p(\mathbf{u}'\mid{\bf f}', \theta
',\alpha',\lambda')$, is omitted because the realizations $\mathbf
{u}$ and $\mathbf{u}'$ are simulated directly from these densities;

\STATE{Sample $v\sim\mbox{U}[0,1]$}
\IF{ $v<\min( \rho,1 ) $}
\STATE Update $( \theta, \alpha, \lambda) \leftarrow
( \theta'\, , \alpha' , \lambda')$;
\STATE Update $\mathbf{u}\leftarrow\mathbf{u}'$;
\ENDIF
\STATE Return $(\theta, \alpha, \lambda, \mathbf{u})$.
\ENDFOR
\end{algorithmic}
\end{algorithm}

Algorithm \ref{alg:fullyprobinf} targets the posterior distribution
(\ref{eqn:posterior_probabilistic}) by generating forward model
proposals via conditional simulation from Algorithm \ref
{alg:ivpsolution} (Supplement B  
provides a parallel
tempering implementation). Within Algorithm \ref{alg:fullyprobinf},
proposed sample paths are conditionally simulated from $p(\mathbf
{u}\mid\theta, \Psi) = \int p(\mathbf{u}, \mathbf{f}
\mid\theta, \Psi) \mbox{d}{\bf f}$, removing the dependence
of the acceptance probability on this intractable marginal density.
Such partially likelihood-free Markov chain Monte Carlo methods \citep{MarjoramEtAl2003}
are widely used for inference on stochastic differential equations
\citep[e.g.,][]{GolightlyWilkinson2011}.

As with numerical based approaches, computation of the forward model is
the rate limiting step in Algorithm \ref{alg:fullyprobinf}.
Computational savings can be obtained by exploiting the structure of
the probabilistic solver, such as by targeting progressively refined
posterior models in a Sequential Monte Carlo framework, or using
ensemble Markov chain Monte Carlo methods \citep{Neal2011} when mixture components of
(\ref{eqn:posterior_solution}) can be quickly computed. We now have
the components for accounting for discretization uncertainty in
differential equation models of natural and physical systems. The
remainder presents a calibration problem illustrating posterior
sensitivity to discretization uncertainty.

\subsection{Inference on a delay initial function model} \label
{subsec:jakstatexample}

The JAK-STAT mechanism describes a series of reversible biochemical
reactions of STAT-5 transcription factors initiated by binding of the
Erythropoietin (Epo) hormone to cell surface receptors \citep
{PellegriniDusanterFourt1997}. After gene activation occurs within the
nucleus, the transcription factors revert to their initial state,
returning to the cytoplasm to be used in the next activation cycle.
This last stage is not well understood and is replaced by the unknown
time delay, $\tau>0$. One model describes changes in 4 reaction states
of STAT-5 via the nonlinear delay initial function problem,
\begin{align}
\left\{
\begin{array}{l l l l}
u_t^{(1)}(t) &= - &k_1\, u^{(1)}(t) \;\mbox{Epo} R_A(t) + \; 2\,k_4\,
u^{(4)}(t-\tau), \quad\quad& t\in[0,60], \\
u_t^{(2)}(t) &= &k_1\, u^{(1)}(t) \;\mbox{Epo} R_A(t) -k_2\,
{u^{(2)}}^2(t), & t\in[0,60],\\
u_t^{(3)}(t) &= -&k_3\, u^{(3)}(t) +\; \tfrac{1}{2}\, k_2 \,
{u^{(2)}}^2(t), & t\in[0,60],\\
u_t^{(4)}(t) &= &k_3\, u^{(3)}(t ) -k_4\, u^{(4)}(t-\tau), & t\in
[0,60],\\
u(t) &=& \phi(t), &t \in[-\tau,0].
\end{array}
\right.
\label{eqn:JackStatDDE}
\end{align}
\noindent The initial function components
$\phi^{(2)}(t) = \phi^{(3)}(t) =\phi^{(4)}(t)$ are identically zero,
and the constant initial function $\phi^{(1)}(t)$ is unknown. The
states for this system cannot be measured directly, but are observed
through a nonlinear transformation,
$H: \mathbb{R}^3 \times\Theta\to\mathbb{R}^4$, defined as,
\begin{align}
H\left\{u(t), \theta\right\} =
\left(
\begin{array}{l}
k_5 \left\{ u^{(2)}(t) + 2u^{(3)}(t) \right\}\\ [4.5pt]
k_6 \left\{ u^{(1)}(t) + u^{(2)}(t) + 2u^{(3)}(t) \right\}\\[4.5pt]
u^{(1)}(t) \\
u^{(3)}(t) \left\{u^{(2)}(t) + u^{(3)}(t)\right\}^{-1}
\end{array}
\right),
\label{eqn:JackStatObs}
\end{align}
\noindent and parameterized by the unknown scaling factors $k_5$ and
$k_6$. The indirect measurements of the states,
\[
y^{(j)}(t_{j,i}) = H^{(j)}\left\{ u(t_{j,i}), \theta\right\} +
\varepsilon^{(j)}(t_{j,i}), \quad1\leq i \leq T_j, \; 1\leq j \leq4,
\]
are assumed to be contaminated with additive zero mean Gaussian
noise,\break
$\{\varepsilon^{(j)}(t_{j,i})\}_{1\leq i \leq T_j, \; 1\leq j \leq
4}$, with experimentally determined standard deviations. Analysis is
based on experimental data \citep{SwameyeEtAl2003,RaueEtAl2009}. As
per \citet{RaueEtAl2009}, the forcing function $\mbox{Epo} R_A$
shares the same smoothness as the solution state (piecewise linear
first derivative) as modelled by a GP interpolation.

This data has been used for calibrating the JAK-STAT pathway mechanism
by a number of authors \citep
[e.g.,][]{SchmidlEtAl2013,SwameyeEtAl2003,
RaueEtAl2009,CampbellChkrebtii2013} under a variety of modelling
assumptions and inferential approaches. The inaccuracy and
computational constraints of numerical techniques required to solve the
system equations have led some authors to resort to coarse ODE
approximations, or to employ semiparametric Generalized Smoothing
methods. This motivates analysis of the structure and propagation of
discretization error through the inverse problem. A further issue is
that the model (\ref{eqn:JackStatDDE}) and its variants may be
misspecified, which must be first uncoupled from discretization effects
for further study. Indeed, the above cited analyses conclude that the
model is not flexible enough to fit the available data, however it is
not clear how much of this misfit is due to model discrepancy, and how
much may be attributed to discretization error.

\begin{figure}[ht!]
\includegraphics[scale=0.97]{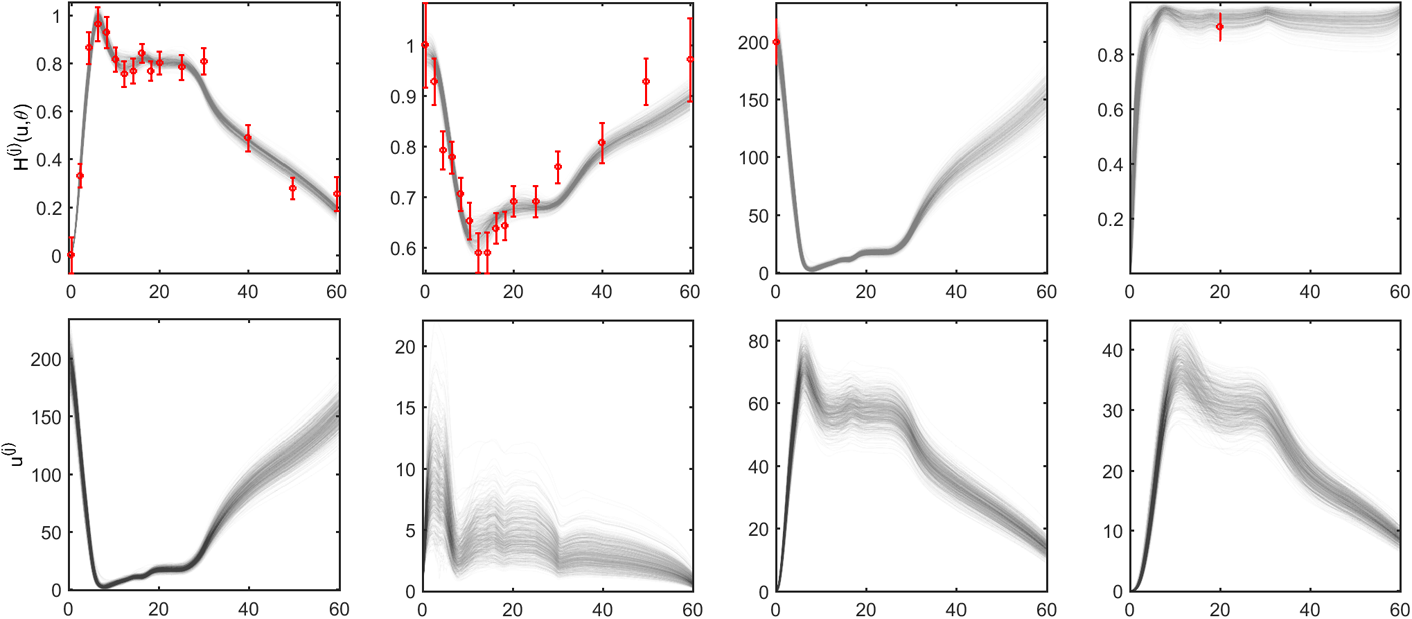}
\caption{Sample paths (grey lines) of the observation processes (first
row) obtained by
transforming a sample from the marginal posterior distribution of the
states (second row)
by the observation function (\ref{eqn:JackStatObs}); experimental data
(red points) and
error bars showing two standard deviations of the measurement error.}
\label{fig:JakStatSolver}
\end{figure}

\begin{table}
\centering
\begin{tabular}{ l l }
\hline
Model component & Prior \\\hline
$\tau$ & $\chi^2_6$ \\
$k_i, \; i = 1,\ldots, 6$ & $\mbox{Exp}(1)$ \\
$\lambda_i, \; i = 1,\ldots, 4$ & $\chi^2_1$\\
$\alpha_i + 100, \; i = 1,\ldots, 4 \quad$ & $\mbox{Log-}\mathcal
{N}(10,1 )$ \\
$u^{(1)}(0) $ & $N ( y^{(3)}(0), 40^2 )$ \\
$u_t^{(i)}, \; i = 1,\ldots, 4$ & $\mathcal{GP}(0, C_t
)$ \\
$u^{(i)}, \; i = 1,\ldots, 4 $ & $1\{i=1\} u^{(1)}(0) + \smallint
_{0}^t u_t^{(i)}(s)\,\mbox{d}s $ \\ \hline
\end{tabular}
\caption{Priors on the states and parameters, assumed to be \textit
{a priori} independent.}
\label{table:priors}
\centering
\end{table}

The probabilistic approach for forward simulation of delay initial
function problems is described in Supplement B,
extending the updating strategy to accommodate current and lagged state
process realization information. The probabilistic solution is
generated using an equally spaced discretization grid of size $N =
500$, and a covariance structure based on the uniform covariance
kernel. Prior distributions are listed in\vadjust{\eject} Table \ref{table:priors}.
Markov chain Monte Carlo is used to obtain samples from the posterior
distribution (\ref{eqn:posterior_probabilistic}).\break A~parallel tempering
sampler \citep{Geyer1991} described in Supplement B 
efficiently traverses the parameter space of this
multimodal posterior distribution.

A sample from the marginal posterior of state trajectories and the
corresponding observation process are shown in Figure \ref
{fig:JakStatSolver}. As expected, the posterior trajectories, which incorporate discretization uncertainty
in the forward problem, appear more diffuse than those obtained in
e.g., \citet{SchmidlEtAl2013}, while more flexibly describing
the observed data, with the exception of a small number of outliers.
Correlation plots and kernel density estimates with priors for the
marginal posteriors of rate parameters $k_1,k_2,k_3,k_4$, observation
scaling parameters $k_5,k_6$, time lag $\tau$, initial function value
$u^{(1)}(0)$, and hyperparameters for the solution uncertainty $\Psi
=(\alpha,\lambda)$ are shown in Figure \ref
{fig:JakStatModCorrs}. All parameters with the exception of the prior
precision $\alpha$ are identified by the data, while the rate
parameter $k_2$ appears to be only weakly identified. We observe
strong correlations between parameters, consistent with previous
studies on this system. For example, there is strong correlation among
the scaling parameters $k_5, k_6$ and the initial first state
$u^{(1)}(0)$. Importantly, there is a relationship between the
length-scale $\lambda$ of the probabilistic solver and the first,
third and fourth reaction rates. Furthermore, this relationship has a
nonlinear structure, where the highest parameter density region seems
to change with length scale implying strong sensitivity to the solver
specifications. The length-scale is the probabilistic analogue, under a
bounded covariance, of the step number in a numerical method. However,
in analyses based on numerical integration, the choice of a numerical
technique effectively fixes this parameter at an arbitrary value that
is chosen a priori. The result here suggests that for this problem, the
inferred parameter values are highly and nonlinearly dependent on the
choice of the numerical method used. Because we have explicitly
quantified discretization uncertainty, the observed lack of fit may suggest
additional model refinement may be required.

\begin{figure}
\includegraphics{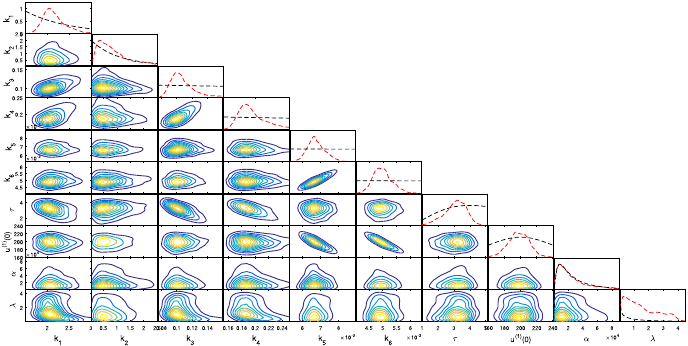}
\caption{50,000 posterior samples of the model parameters using a
probabilistic solver with a
grid of size 500, generated using parallel tempering Algorithm 7 
 with ten chains.
Prior probability densities are shown in black. Note the complex
posterior correlation structure
between model parameters and solver hyperparameters across the lower rows.}
\label{fig:JakStatModCorrs}
\end{figure}

\section{Applications} \label{sec:FsimAndInference}

We extend probability modelling of solution uncertainty to a variety of
forward problems: chaotic dynamical systems, an ill-conditioned mixed
boundary value problem, a~stiff, high-dimensional initial value
problem, and extensions to PDEs.

\subsection{Discretization uncertainty under chaotic dynamics} \label
{sec:lorenz}

The classical ``Lorenz63'' initial value problem \citep{Lorenz1963} is
a three-state ODE model of convective fluid motion induced by a
temperature difference between an upper and lower surface. States
$u^{(1)}, u^{(2)}$ and $u^{(3)}$ are proportional to the intensity of
the fluid's convective motion, the temperature difference between (hot)
rising and (cold) descending currents, and the deviation of the
vertical temperature profile from linearity respectively. The model
describing the time-evolution of their dynamics,
\begin{align}
\left\{
\begin{array}{llll}
u^{(1)}_t &= - \sigma( u^{(1)} + u^{(2)}) , & \quad\quad t\in[0,20],
\\
u^{(2)}_t &= - r u^{(1)} - u^{(2)} - u^{(1)}u^{(3)},& \quad\quad t\in
[0,20],\\
u^{(3)}_t &= u^{(1)}u^{(2)} - b u^{(3)},& \quad\quad t\in[0,20],\\
u & = u^*, & \quad\quad t = 0,\\
\end{array}
\right.
\end{align}
\noindent depends on dimensionless parameters $\sigma,r$ and $b$. The
standard choice of $\theta= (\sigma,r,b)=(10, 8/3, 28)$ with initial
state $u^*(0)= ( -12,-5,38)$ falls within the chaotic regime. This
system has a unique solution whose trajectory lies on a bounded region
of the phase space \citep[e.g.,][pp. 271--272]{Robinson2001}, and is
unknown in closed form.

A sample of 1000 trajectories generated from the probabilistic solution
for this system is shown in Figure \ref{fig:lorenzerrors} given
hyperparameters $N = 5001, \alpha= N, \lambda=2h, h=20/N$, constant
prior mean function and a squared exponential covariance structure.
There is a short time window within which there is negligible
uncertainty in the solution, but the accumulation of discretization
uncertainty results in rapidly diverging, yet highly structured errors.
Figure \ref{fig:lorenzerrors} shows posterior samples in the 3 state
dimensions (top row) and density estimates of the samples for
$u^{(1)}(t)$ (bottom row) at 4 distinct evaluation times. The
structured uncertainty in the forward model (\ref
{eqn:posterior_solution}) is not simply defined by widening of the high
posterior density region, but instead exhibits a behaviour that is
consistent with the dynamics of the underlying stable
attractor.\looseness=1

\subsection{Solution multiplicity in a mixed boundary value problem}
\label{sec:MBVP}

\begin{figure}
\includegraphics[scale=1.1]{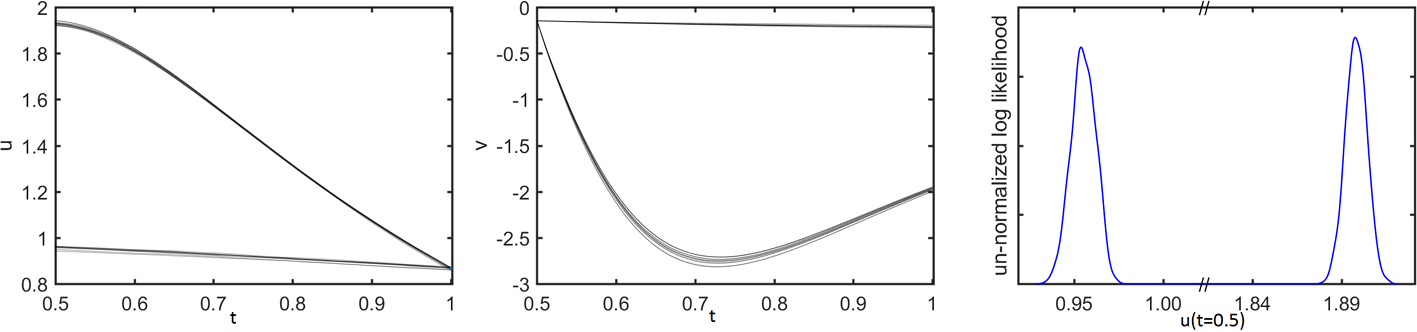}
\caption{$100$ posterior realizations from (\ref{eqn:MBVposterior}).
Left and centre: states,
$u$ and $v$, evaluated over a fine grid, ${\bf t}$, over the domain,
$\mathcal{D} = [.5,1]$.
Right: marginal un-normalized log posterior density over the unknown
initial condition $u(t=.5)$.}
\label{fig:mbvpost}
\end{figure}

Consider a special case the Lane--Emden ODE mixed boundary problem,
which is used to model the density, $u$, of gaseous spherical objects,
such as stars, as a function of radius, $t$, (\citealt{Shampine2003}).
The second order system can be rewritten as a system of a first order
equations on the interval $[.5,1]$ with mixed boundary values,
\begin{align}
\left\{
\begin{array}{ll}
u_t       & =\, v,               \\
v_t       & =\, -2vt^{-1} - u^5, \\
u(t = 1)  & =\, \sqrt{3}/2,      \\
v(t = .5) & =\, -288/2197,
\end{array}
\right.
\label{eqn:Emden1}
\end{align}
where the notation $u(t=1)$ and $v(t=.5)$ is used to emphasize that
boundary conditions are different for each state. The two leftmost
panels of Figure \ref{fig:mbvpost} illustrate multimodality in the
posterior solution, when multiple functions satisfy the mixed boundary
value problem. These high posterior density regions concentrate as the
number of grid points and the prior precision grow, suggesting the
potential for defining intermediate target densities for sequential and
particle Markov chain Monte Carlo schemes to sample realizations from
the posterior distribution over unknown model parameters within the
calibration problem. This example highlights the need for accurately
modelling functional rather than point-wise solution uncertainty.

Mixed boundary value problems introduce challenges for many numerical
solvers, which employ optimization and the theory of ODE initial value
problems to estimate the unknown initial state, here $u(t=.5)$, subject
to the state constraints at either boundary of the domain of
integration, here $[.5,1]$. The optimization over $u(t=.5)$ is
performed until the corresponding initial value problem solution
satisfies the known boundary condition, $u^*(t=1)$, to within a user
specified tolerance. Considering the mixed boundary value problem (\ref
{eqn:Emden1}), the probabilistic solution consists of an inference
problem over the unspecified initial value, $u(t=.5)$, given its known
final value $u^*(t=1)$. The likelihood naturally defines the mismatch
between the boundary value $u(t=1)$, obtained from the realized
probabilistic solution given $u^*(t=1)$, as follows,
\begin{align}
u^*(t=1) \mid u(t=.5), v^*(t=.5)
& \; \sim\; \mathcal{N} \left\{ m^{N(u)}(t=1), C^{N(u)}(t=1,t=1)
\right\},
\label{eqn:endptlikelihood}
\end{align}
\noindent where $m^{N(u)}(t=1)$ and $C^{N(u)}(t=1,t=1)$ are the
posterior mean and covariance for state $u$ at time $t=1$, obtained via
Algorithm \ref{alg:ivpsolution}. The posterior over the states is,
\begin{align}
&\left[ u,v, u(t=.5) \mid u^*(t=1), v^*(t=.5) \right] \label
{eqn:MBVposterior} \\
&\propto \left[u, v \mid u(t=.5), u^*(t=1), v^*(t=1) \right] \left[
u(t=.5) \mid u^*(t=1) , v^*(t=.5) \right]
\left[ u(t=.5) \right]. \nonumber
\end{align}
The possibility of multimodality suggests the use of an appropriate
Markov chain Monte Carlo scheme, such as parallel tempering \citep
{Geyer1991}, implemented in Supplement B, 
which can
quickly identify and explore disjoint regions of high posterior probability.

\subsection{Discretization uncertainty for a model of fluid dynamics}
\label{sec:Nav-Stokes}

The following example illustrates that the probabilistic framework
developed may be reliably and straightforwardly applied to a
high-dimensional dynamical system. The Navier--Stokes system is a
fundamental model of fluid dynamics, incorporating laws of conservation
of mass, energy and linear momentum for an incompressible fluid over a
domain given constraints imposed along the boundaries. It is an
important component of complex models in oceanography, weather,
atmospheric pollution, and glacier movement. Despite its extensive use,
the dynamics of Navier--Stokes models are poorly understood even at
small time scales, where they can exhibit turbulence.

\begin{figure}[ht!]
\includegraphics[scale=0.95]{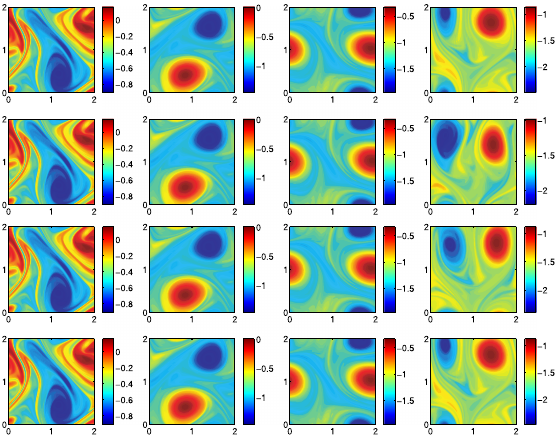}\vspace*{-3pt}
\caption{Time evolution of four forward simulated realizations (along
rows) of fluid vorticity,
governed by the forced Navier--Stokes model (\ref
{fig:NavierStokesPDE}), over two spatial dimensions:
the angle of the inner ring (horizontal axis) and outer ring (vertical
axis) of a two dimensional
torus. Angles are expressed in radians. Vorticities are evaluated at
times ${\bf t} = (0.2, 0.4, 0.6, 0.8)$
units (along columns).}
\label{fig:navstokes}\vspace*{-3pt}
\end{figure}

The Navier--Stokes PDE model for the time evolution of 2 components of
the velocity, $u:\mathcal{D}\to\mathbb{R}^2$, of an incompressible
fluid on a torus, $\mathcal{D} = [0,2\pi)\times[0,2\pi]$, can be
expressed in spherical coordinates as,
\begin{align}
\left\{
\begin{array}{l l l l }
u_t - \theta\,\Delta u + \left(u \cdot\nabla\right) u & = &
\textsf{g} - \nabla\textsf{p}, \quad\quad& (x,t) \in\mathcal
{D}\times[a,b],\\
\nabla\cdot u & = & 0, & (x,t) \in\mathcal{D}\times[a,b],\\
\int u^{(j)}\,\mbox{d}x & = & 0, & (x,t) \in\mathcal{D}\times[a,b],
\; j = 1,2,\\
u & = & u^*, & (x,t) \in\mathcal{D}\times\{0\},
\end{array}
\right.
\label{fig:NavierStokesPDE}
\end{align}
\noindent where $\Delta:=[\tfrac{\partial^2}{\partial
x^2_1}+\tfrac{\partial^2}{\partial x^2_2}]$ is the Laplacian
operator such that $\Delta u = (u^{(1)}_{x_1 x_1} + u^{(1)}_{x_2
x_2}, u^{(2)}_{x_1 x_1}{+}\break u^{(2)}_{x_2 x_2})$, and $ \nabla
:=[\tfrac{\partial}{\partial x_1}+\tfrac{\partial}{\partial
x_2}]$ is the gradient operator such that $\nabla u =(
u^{(1)}_{x_1} + u^{(1)}_{x_2},\break u^{(2)}_{x_1} +u^{(2)}_{x_2} )$.
The model is parametrized by the viscosity of the fluid, $\theta\in
\mathbb{R}^+$, the pressure function $\textsf{p}: \mathcal{D}\times
[a,b] \to\mathbb{R}$, and the external time-homogeneous forcing
function $\textsf{g} := 5\times10^{-3} \cos[(\tfrac
{1}{2},\tfrac{1}{2}) \cdot x ]$. We further assume
periodic boundary conditions, and viscosity $\theta= 10^{-4}$ in the
turbulent regime. The exact solution of the Navier--Stokes boundary
value problem (\ref{fig:NavierStokesPDE}) is not known in closed form.

We will visualize the probabilistic solution of the Navier--Stokes
system by reducing the two components of velocity to a one dimensional
function: the local spinning motion of the incompressible fluid, called
vorticity, which we define as, $\varpi= -\nabla\times u$, where
$\nabla\times u$ represents the rotational curl (the cross product of
$\nabla$ and $u$), with positive vorticity corresponding to clockwise
rotation. We discretize the Navier--Stokes model (\ref
{fig:NavierStokesPDE}) over a grid of size $128$ in each spatial
dimension. A pseudo spectral projection in Fourier space yields 16,384
coupled, stiff ODEs with associated constraints (details provided on
the accompanying website).
Figure \ref{fig:navstokes} shows four forward simulated vorticity
trajectories (along rows), obtained from two components of velocity at
four distinct time points (along columns). Differences in the state
dynamics arise by the final time shown, where the four trajectories
visibly diverge from one another. These differences describe
uncertainty resulting from discretizing the exact but unknown infinite
dimensional solution.

\subsection{Discretization uncertainty for partial differential
equations}\label{sec:HeatPDE}

Similarly to the ODE case, we model the prior belief about the fixed
but unknown solution and partial derivatives of a PDE boundary value
problem by defining a probability measure over multivariate
trajectories, as well as first or higher order derivatives with respect
to spatial inputs. The prior structure will depend on the form of the
PDE model under consideration. We illustrate discretization uncertainty
modelling for PDEs by considering the parabolic heat equation
describing heat diffusion over time along a single spatial dimension
implicitly via,
\begin{align}
\left\{
\begin{array}{l l l l l}
u_t(x, t) &=& \kappa u_{xx}(x, t), \quad\quad& t\in[0,0.25], \; & x
\in[0,1],\\
u(x, t) &=& \sin\left( x \pi\right), \quad\quad& t=0, \; & x \in
[0,1],\\
u(x, t) &=& 0, \quad\quad& t\in(0,0.25], \; & x \in\{0,1\},
\end{array}
\right.
\label{eqn:HeatPDE}
\end{align}
with conductivity parameter of $\kappa= 1$. Note that this PDE has the
explicit form (\ref{eqn:vecfield}) with $\mathcal{A}=\{u, u_t\}$ and
$\mathcal{B}=u_{xx}$.

One modelling choice for extending the prior covariance to a
spatio-temporal domain is to assume separability by defining
covariances over time and space independently. Accordingly we choose
temporal kernel $ R_{\lambda}$ as in Section \ref{sec:ivpprior} and
let $ R_{\nu}: \mathbb{R}\times\mathbb{R}\to\mathbb{R}$ be a
possibly different spatial kernel function with length-scale $\nu$. We
integrate the spatial kernel once to obtain $ Q_\nu(x_j,x_k) =\int
_{0}^{x_j} R_\nu(z,x_k)\mbox{d}z$ and twice to obtain $S_\nu
(x_j,x_k) =\int_{0}^{x_j} Q_\nu(z,x_k)\mbox{d}z$. The covariance
structures between temporal and spatial evaluation points $[x_j, t_j]$
and $[x_k, t_k]$ are obtained from the process convolution formulation
with temporal and spatial prior precision parameters $\alpha$ and
$\beta$ respectively, as follows,\looseness=1
\begin{align*}
\begin{array}{lll}
C^0_{xx}([x_j,t_j],[x_k,t_k]) &=& \alpha^{-1}\beta^{-1}\smallint
_{\mathbb{R}} R_{\nu}(x_j, z) R_{\nu}(x_k, z) \mathrm{d}z \smallint
_{\mathbb{R}} Q_\lambda(t_j, w) Q_\lambda(t_k, w) \mathrm{d}w, \\
C_t^0([x_j,t_j],[x_k,t_k]) &=& \alpha^{-1}\beta^{-1}\smallint
_{\mathbb{R}} S_{\nu}(x_j, z) S_{\nu}(x_k, z) \mathrm{d}z \smallint
_{\mathbb{R}} R_\lambda(t_j, w) R_\lambda(t_k, w) \mathrm{d}w, \\
C^0([x_j,t_j],[x_k,t_k]) &=& \alpha^{-1}\beta^{-1}\smallint_{\mathbb
{R}} S_{\nu}(x_j, z) S_{\nu}(x_k, z) \mathrm{d}z \smallint_{\mathbb
{R}} Q_\lambda(t_j, w) Q_\lambda(t_k, w) \mathrm{d}w,
\end{array}
\end{align*}
with cross-covariances defined analogously. The mean function and
covariance specification uniquely determine the joint GP prior on the
state and its partial derivatives given hyperparameters $\Psi$, which
include $\alpha,\beta,\lambda,m^0_{xx}, m^0_t,m^0$, at vectors of
temporal and spatio-temporal locations $({\bf x_j}, {\bf t_j})$, $({\bf
x_k}, {\bf t_k})$, $({\bf x_l}, {\bf t_l})$ as:
\begin{align}
\left[
\begin{matrix} u_{xx}({\bf x_j},{\bf t_j}) \\ u_t({\bf x_k},{\bf
t_k})\\ u({\bf x_l},{\bf t_l})
\end{matrix}
\right] \sim
{\cal{GP}}\left(\left[
\begin{matrix} m^0_{xx}({\bf x_j},{\bf t_j}) \\ m^0_t({\bf x_k},{\bf
t_k})\\ m^0({\bf x_l},{\bf t_l})
\end{matrix}
\right]
, \left[
\begin{matrix} C_{1,1} & C_{1,2} & C_{1,3}\\
C_{2,1} & C_{2,2} & C_{2,3} \\
C_{3,1} & C_{3,2} & C_{3,3}
\end{matrix}
\right] \right),
\label{eqn:jointpriorheateqn}
\end{align}
where,
\begin{align*}
\begin{array}{l@{\ }ll@{\ }l}
C_{1,1} & = C^0_{xx}([{\bf x_j},{\bf t_j}],[{\bf x_j},{\bf t_j}]),
\quad&
C_{1,2} & = \smallint_{0}^{{\bf x_k}}\smallint_{0}^{{\bf z}} \tfrac
{\partial}{\partial{\bf t_k}}C^0_{xx}([{\bf x_j},{\bf t_j}],[{\bf
w},{\bf t_k}]) \mbox{d}{\bf w}\mbox{d}{\bf z},\\
C_{1,3} & = \smallint_{0}^{{\bf x_l}}\smallint_{0}^{{\bf z}}
C^0_{xx}([{\bf x_j},{\bf t_j}],[{\bf w},{\bf t_l}]) \mbox{d}{\bf
w}\mbox{d}{\bf z}, &
C_{2,2} & = C_t^0([{\bf x_k},{\bf t_k}],[{\bf x_k},{\bf t_k}]),\\
C_{2,3} & = \smallint_{0}^{{\bf t_l}} C_t^0([{\bf x_k},{\bf
t_k}],[{\bf x_l},{\bf w}])\mbox{d}{\bf w}, \quad&
C_{3,3} & = C^0([{\bf x_l},{\bf t_l}],[{\bf x_l},{\bf t_l}]).
\end{array}
\end{align*}
When integrated covariances are available in closed form (e.g., Supplement D.4, 
the above matrices can be computed
analytically.

\begin{figure}
\includegraphics[scale=1.05]{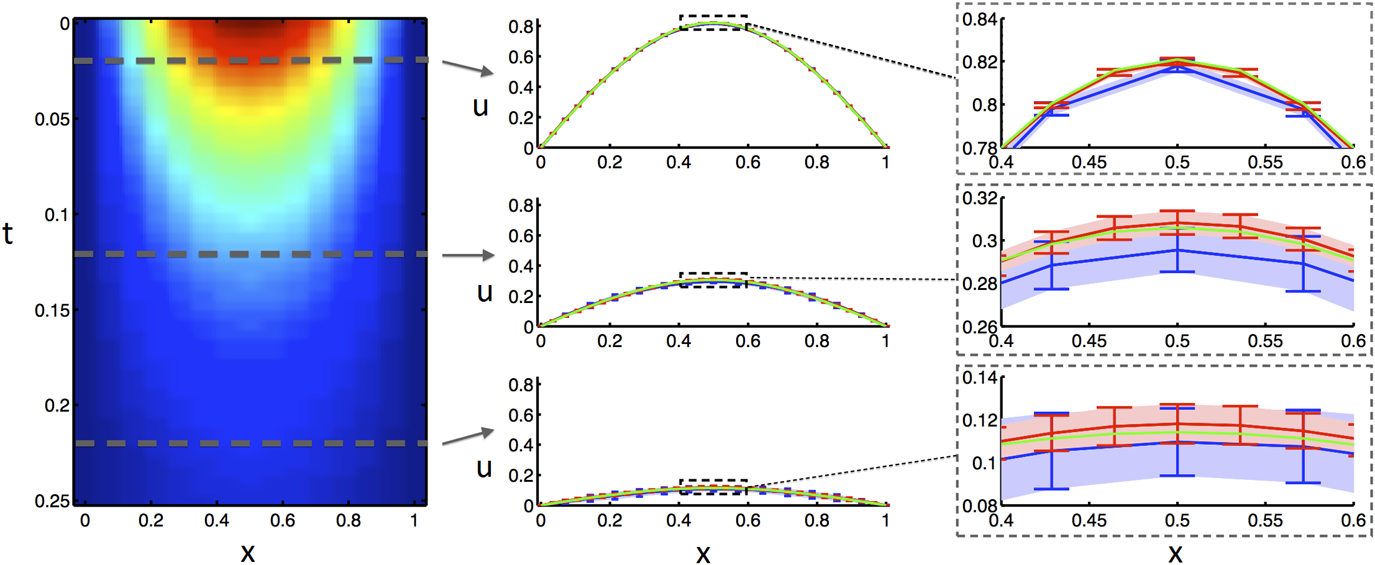}
\caption{Uncertainty in the solution of the heat equation discretized
using two spatio-temporal
grids in blue and red respectively: $15 \times 50$ and $29 \times 100$. Spatial posterior
predictions are shown at $t=(0.02, 0.12, 0.22)$. The exact solution is
shown in green; error
bars show the empirical mean and 2 standard deviations computed from 50
simulations.}
\label{fig:Heat_PDE_Mesh_Comparison}
\end{figure}

A generalization of the sequential updating scheme presented in
Algorithm \ref{alg:ivpsolution} is outlined in Supplement B. 
The \textit{forward in time, continuous in space}
(FTCS) sampling scheme takes steps forward in time, while smoothing
over spatial model interrogations at each step. In Figure \ref
{fig:Heat_PDE_Mesh_Comparison} the exact solution for the PDE initial value
problem (\ref{eqn:HeatPDE}) with $\kappa=1$ on the domain
$[0,1]\times[0,0.25]$ is compared with the probability model of
uncertainty given two different grid sizes. The simulations illustrate
that as the mesh size becomes finer, the uncertainty in the solution decreases.

The characterization of spatial uncertainty is critical for more
complex models in cases where there are computational constraints
limiting the number of system evaluations that may be performed. We
illustrate its effect by performing posterior inference for the
parameter $\kappa$ from data simulated from an exact solution with
$\kappa= 1$ and zero mean Gaussian error with standard deviation of
0.005. Figure \ref{fig:Heat_PDE_Inference_Comparison} shows the
posterior distribution over $\kappa$ obtained by using both a
deterministic ``forward in time, centred in space'' (FTCS) finite
difference solver and a probabilistic solver under a variety of
discretization grids. For the probabilistic solution, a squared
exponential covariance structure was chosen over the spatial domain,
with length-scale 1.5 times the spatial step size. For computational
efficiency a uniform covariance structure was chosen over the temporal
domain, with length-scale 2 times the temporal step size. The prior
precision was set to 10,000. Note the change in the scale as the mesh
is refined and the posterior variance decreases. The use of a
deterministic solver illustrates the problem of posterior
under-coverage that may occur if discretization uncertainty is not
taken into account and too coarse a grid is employed relative to the
complexity of model dynamics. If the discretization is not fine enough,
the approximate posterior assigns negligible probability mass to the
true value of $\kappa$, in contrast to the unbiased posterior obtained
when employing the exact solution for inference. In this illustrative
setting, the use of a probabilistic solver propagates discretization
uncertainty in the solution through to the posterior distribution over
the parameters. By using a probabilistic solver with the same coarsely
discretized grid in this example, the posterior density is more diffuse
but approximately centred on the true value of $\kappa$.

\begin{figure}
\includegraphics{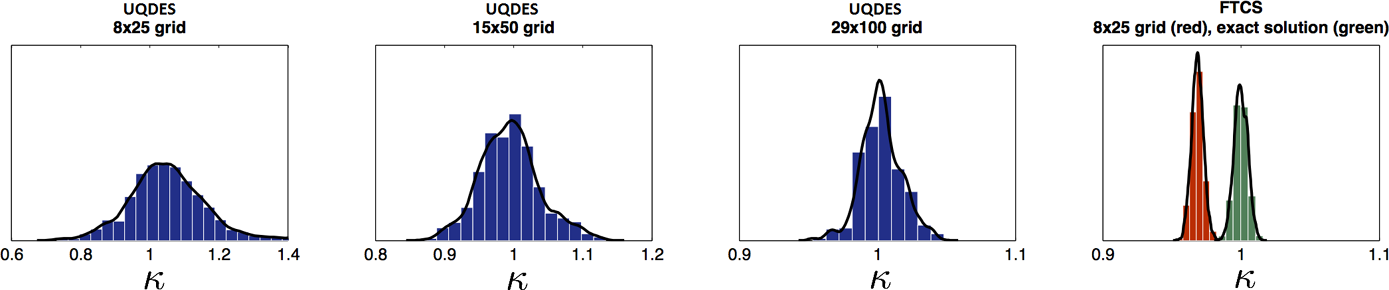}
\caption{Posterior densities for the conductivity parameter in the
heat equation given simulated
data over an $8\times 25$ spatio-temporal grid from the exact solution
with $\kappa= 1$. Inference
is based on the probabilistic differential equation solver using three
grid sizes (blue). The
posterior density based on a deterministic FTCS numerical scheme (red)
and the posterior density
based on the exact solution (red) are also provided.}
\label{fig:Heat_PDE_Inference_Comparison}
\end{figure}

\section{Discussion} \label{sec:discussion}

This paper presents a Bayesian formalism for characterizing the
structure of discretization uncertainty in the solution of differential
equations. Discretization uncertainty for dynamical systems may be
naturally described in terms of degrees of belief, an idea first
formulated by \citet{Skilling1991} and \citet{OHagan1992}. A Bayesian
function estimation approach offers a way of defining and updating our
beliefs about probable model trajectories conditional on interrogations
of the differential equation model. The resulting uncertainty over the
solution can then be propagated through the inferential procedure by
defining an additional layer in the Bayesian posterior hierarchy. As
such, the proposed formulation is a radical departure from the existing
practice of constructing an approximate statistical error model for the
data using numerical techniques.

The more general goal of studying other sources of model uncertainty
requires first understanding how much is known about the dynamics of
the proposed model and how that information changes with time, space,
and parameter settings. Fundamentally, the probabilistic approach we
advocate defines a trade-off between solution accuracy and the size of
the inverse problem, by choice of discretization grid size. This
relationship has been of deep interest in the uncertainty
quantification community \citep[see, e.g.,
][]{KaipioEtAl2004,ArridgeEtAl2006} and our approach offers a way of
resolving the individual contributions to overall uncertainty from
discretization, model misspecification \citep[in the sense
of][]{KennedyOhagan2001}, and Monte Carlo error.

The area of probabilistic numerics is an emerging field at the
intersections of Statistics with Applied Mathematics, Numerical
Analysis, and Computer Science, and we hope that this paper will
provide a starting point for further work in this area. In particular,
while the proposed approach is computationally competitive with first
order numerical methods, it will be important to implement more
efficient sampling algorithms analogous to those of higher order
deterministic solvers developed from over a century of research into
numerical methods. For example a sampling scheme developed based on
multiple model interrogations located between discretization grid
points. Because many traditional higher order numerical solvers (e.g.
explicit fourth order Runge--Kutta) use such a carefully constructed
weighted average of model interrogations, our approach also provides a
way to interpret numerical method in the context of a general
probabilistic framework \citep[e.g.,][]{HenningAndFriends}. It is not
surprising that a self-consistent Bayesian modelling approach has
similarities to numerical methods whose form has been chosen for their
convergence properties. From the perspective presented in this paper,
rather than emulate numerical solvers, quadrature is automatically
provided by the Gaussian process integration that occurs at each step
of Algorithm \ref{alg:ivpsolution}. If the solution smoothness is
correctly specified \textit{a priori} via the covariance structure,
then the Gaussian process integration step will naturally yield an
appropriate quadrature rule.

Modelling uncertainty about an exact but unknown solution may be viewed
as providing both an estimate of the exact solution and its associated
functional error analysis. Although proportionally scalable,
uncertainty quantification methods are naturally more expensive than
deterministic numerical approximations without associated error
analyses, making efficient sampling strategies for stochastic model
calibration an important goal for future research. As may be expected,
low dimensional, stable systems, without any topological restrictions
on the solution may not suffer from significant discretization effects
at a reasonably fine discretization grid. However, it is often
difficult to determine \textit{a priori} whether this stability will
persist across the entire parameter space. An example is illustrated by
the JAK-STAT case study, where the relationships between model
parameters and solver specifications (reflecting changes in the system
dynamics) may introduce bias in the estimation of parameters. If model
uncertainty is indeed negligible and comparable over all parameter
settings, sampled trajectories obtained via the probabilistic method
will have little variability around the exact solution. If, however,
the chosen discretization grid is too coarse with respect to the rate
of change of system dynamics over some subset of the input space, the
uncertainty associated with the model error will be accounted and
propagated. For this reason, it may be prudent to adopt a probabilistic
approach if one cannot be reasonably certain of negligible and
comparable discretization error across all parameter settings.

Importantly, the probabilistic framework can be used to optimize
computation. For a given discretization grid size, a natural question
is how we can arrange the grid on the domain in such a way that the
resulting approximation is as close as possible in terms of some
specified criterion. Most commercially available numerical differential
equation solvers select the step length of the discretization grid
sequentially. Controlling the local error by choice of the distance
between neighbouring grid locations is called \textit{adaptive step
size selection}. For example, in the simplest cases, this is
accomplished by evaluating the local truncation error at each step and
halving the step size if this exceeds an error tolerance that is
pre-specified by the user (the process may be repeated several times
per step until an appropriate local truncation error is achieved). In a
similar way to local truncation error, a predictive probability model
of discretization uncertainty may be used to guide sequential mesh
refinement for complex dynamical systems. Optimizing a chosen design
criterion would yield the desired discretization grid, as suggested by
\cite{Skilling1991}. Some initial work in this direction has been
conducted in \citet{Chkrebtii2013}, where it is shown that the natural
Kullback--Leibler divergence criterion may be successfully used to
adaptively choose the discretization grid in a probabilistic manner.

Prior specification may be used to rule out physically inconsistent
trajectories that may arise from a numerical approximation. In
numerical analysis, specialized techniques are developed and shown to
satisfy certain properties or constraints for a given class of
differential equations. A probabilistic framework would permit imposing
such constraints in a more flexible and general way. Smoothness
constraints on the solution may be incorporated through the prior, and
anisotropy in the trajectory may be incorporated at each step through
the covariance and the error model. Although in such general cases it
may no longer be possible to obtain closed form representations of the
sequential prior updates, an additional layer of Monte Carlo sampling
may be used to marginalize over the realizations required to
interrogate the differential equation model.

These developments lie at the frontier of research in uncertainty
quantification, dealing with massive nonlinear models, complex or even
chaotic dynamics, and strong spatial-geometric effects (e.g. subsurface
flow models). Even solving such systems approximately is a problem that
is still at the edge of current research in numerical
analysis.\looseness=1

Inference and prediction for computer experiments \citep
{SacksEtAl1989} is based on a model of the response surface, or
emulator, constructed from numerical solutions of prohibitively
expensive large-scale system models over a small subset of parameter
regimes. Currently, numerical uncertainty is largely ignored when
emulating a computer model, although it is informally incorporated
through a covariance nugget \citep[see, e.g.,][]{GramacyLee2012} whose
magnitude is often chosen heuristically. Adopting a probabilistic
approach on a large scale will have practical implications in this area
by permitting relaxation of the error-free assumption adopted when
modelling computer code output, leading to more realistic and flexible
emulators. For this reason, Bayesian calibration of stochastic
emulators 
will be an important area of future research.\looseness=1

Numerical approximations of chaotic systems are inherently
deterministic, qualitative dynamics (of models such as Lorenz63) are
often studied statistically by introducing artificial perturbations on
the numerical trajectory. The rate of exponential growth (the \textit
{Lyapunov exponent}) between nearby trajectories can be estimated via
Monte Carlo by introducing small changes to the numerical trajectory
over a grid defined along the domain. Our proposed approach explicitly
models discretization uncertainty and therefore offer an alternative to
such artificial perturbations, thus opening new avenues for exploring
uncertainty in the system's qualitative features.


\begin{supplement}
\stitle{Supplementary Material for ``Bayesian Solution Uncertainty Quantification for Differential Equations''}
\sdatatype{.pdf}
\sfilename{bayesian-solution-uq-for-differential-equations-chkrebtii-et-al-2016-supplement.pdf}
\slink[doi]{10.1214/16-BA1017SUPP}
\end{supplement}

\bibliographystyle{ba}

\begin{acknowledgement}
We would like to thank Tan Bui, Simon Byrne, Mike Dowd, Des Higham,
Richard Lockhart, Youssef Marzouk, Giovanni Petris, Daniel Simpson,
John Skilling, Andrew Stuart, Haoxuan Zhou, Wen Zhou, and anonymous referees for
their invaluable comments and suggestions. We would also like to
acknowledge the funding agencies that have supported this research.
Natural Sciences and Engineering Research Council of Canada supported
O.~A. Chkrebtii (2010:389148) and D. A. Campbell (RGPIN04040-2014). B.
Calderhead gratefully acknowledges his previous Research Fellowship
through the 2020 Science programme, funded by EPSRC grant number
EP/I017909/1 and supported by Microsoft Research, as well as support in
part by EPSRC Mathematics Platform grant EP/I019111/1. M.~A. Girolami is
supported by the UK Engineering and Physical Sciences Council (EPSRC)
via the Established Career Research Fellowship EP/J016934/1 and the
Programme Grant Enabling Quantification of Uncertainty for Large-Scale
Inverse Problems, EP/K034154/1, \url{http://www.warwick.ac.uk/equip}.
He also gratefully acknowledges support from a Royal Society Wolfson
Research Merit Award.
\end{acknowledgement}


\appendix

\section{Brief overview of differential equation models}\label{sec:OverviewOfDEs}


 {\em Ordinary Differential Equation} (ODE) models represent the evolution of system states $u \in\mathcal{U}$ over time $t\in[0,L], L>0$ implicitly through the equation $u_t(t, \theta)= f\left\{ t, u(t,\theta), \theta \right\}$. The standard assumptions about the known function $f$ are that it is continuous in the first argument and Lipschitz continuous in the second argument, ensuring the existence and uniqueness of a solution corresponding to a known initial state $u^*(0)$.  Inputs and boundary constraints define several model variants.  

The {\em Initial Value Problem} (IVP) of order one models time derivatives of system states as known functions of the states constrained to satisfy initial conditions, $u^*(0)$, 
\begin{align}
 \left\{ 
  \begin{array}{l l}
    u_t(t) = f \left\{ t, u(t), \theta \right\}, & \quad t \in [0,L], \, L>0, \\
    {u}(0) = u^*(0).
  \end{array} \right. 
\label{eqn:IVPsystem}
\end{align}
The existence of a solution is guaranteed under mild conditions \citep[see for example,][]{Butcher2008,CoddingtonLevinson1955}.   Such models may be high dimensional and contain other complexities such as algebraic components, functional inputs, or higher order terms.

While IVP models specify a fixed initial condition on the system states, the {\em Mixed Boundary Value Problem} (MBVP) may constrain different states at different time points.  Typically these constraints are imposed at the boundary of the time domain. For example, the two state mixed boundary value problem is,
\begin{align}
 \left\{ 
\begin{array}{lll}
\left[u_t(t), v_t(t)\right] =f \left\{ t,\left[u(t), v(t)\right], \theta \right\}, \quad\quad & t\in[0, L],\, L>0, \\ 
\phi\left\{ v(0), u(L) \right\} = 0,
\end{array}\right.
\label{eqn:multiBVP}
\end{align}
 and can be straightforwardly generalized to higher dimensions and extrapolated beyond the final time point $L$.   Whereas a unique IVP solution exists under relatively mild conditions, imposing mixed boundary constraints can result in multiple solutions \citep[e.g.][]{Keller1968} introducing possible problems for parameter estimation. 

The {\em Delay Initial Function Problem} (DIFP) generalizes the initial constraint of IVPs to an initial function, $\phi(t)$, thereby relating the derivative of a process to both present and past states at lags $\tau_j \in [0, \infty)$,
\begin{align}
 \left\{ 
  \begin{array}{l l l}
    u_t(t) &= f \left\{ t, u(t-\tau_1),\ldots,u(t-\tau_d), \theta \right\} , & \; t \in [0, L],\, L>0, \\
     u(t) &= \phi(t), &  \; t \in \left [0-\max_{j}( \tau_j), 0\right ].
  \end{array} \right. 
\label{eqn:DIFP}
\end{align}
DIFPs are well suited to describing biological and physical dynamics that take time to propagate through systems.  However, they  pose challenges to numerical techniques due to potentially large and structured truncation error   \citep{BellenZennaro2003}.   

{\em Partial Differential Equation} (PDE) models represent the derivative of states with respect to multiple arguments, for example time and spatial variables. PDE models are ubiquitous in the applied sciences, where they are used to study a variety of phenomena, from animal movement, to the propagation pattern of a pollutant in the atmosphere.  The main classes of PDE models are based on elliptic, parabolic and hyperbolic equations.  Further adding to their complexity,  functional boundary constraints and initial conditions make PDE models  more challenging, while their underlying theory is less developed compared to ODE models \citep{PDEBOOK}.

\section{Problem specific algorithms}\label{sec:algorithms}

This section provides algorithms for the extension of the probabilistic modelling of solution uncertainty to deal with the MBVP and DIFP.  Algorithm \ref{alg:mbvpsolution} provides a Metropolis-Hastings implementation of the Markov chain Monte Carlo algorithm from the two state mixed boundary value problem (\ref{eqn:multiBVP}) with boundary constraint $[ v(t=0),u(t=L) ] - [v^*(t=0), u^*(t=L) ] = [0,0]$.  Since these models may be subject to solution multiplicity, Algorithm \ref{alg:mbvppartemp} provides the corresponding parallel tempering \citep{Geyer1991} implementation to permit efficient exploration of the space of $u(t=0)$.   Algorithm \ref{alg:ddesolution} describes forward simulation for the delay initial function problem (\ref{eqn:DIFP}). In the context of the inverse problem, Algorithm \ref{alg:DDEInvMH} describes a Metropolis-Hastings implementation of the Markov chain Monte Carlo sampler for drawing realizations from the posterior distribution of the unknowns in the JAK-STAT system problem described in Section \ref{subsec:jakstatexample}. Algorithm \ref{alg:DDEInvPT} describes its corresponding parallel tempering implementation.   Algorithm \ref{alg:pdesolution} constructs a probabilistic solution for the heat equation PDE boundary value problem (\ref{eqn:HeatPDE}).

\setcounter{algorithm}{2}


\begin{algorithm}[!htp]
\caption{For the ODE mixed boundary value problem  (\ref{eqn:multiBVP}), draw $K$ samples from the forward model conditional on boundary values $ u^*(t = L), v^*(t = 0), \theta, \Psi$, and a discretization grid ${\bf s}=  (s_1,\cdots,s_N)$ }
\label{alg:mbvpsolution}
\begin{algorithmic}
\STATE At time $s_1 = 0$ initialize the unknown boundary value $u(t=0)\sim \pi(\cdot)$ by sampling from the prior $\pi$;
\STATE Use Algorithm \ref{alg:ivpsolution} to conditionally simulate  $(\mathbf{u};\mathbf{v}) =\left(u(t_1),\ldots,u(t_T); v(t_1),\ldots,v(t_T)\right)$ from the probabilistic solution of model (\ref{eqn:IVPsystem}) given ${u}(t=0), {v}^*(t=0), \theta, \Psi$;

\FOR{$k = 1:K$}
\STATE Propose unknown boundary value $u'(t=0) \sim q\left\{\cdot \mid u(t=0)\right\}$ from a proposal density $q$;
\STATE Use Algorithm \ref{alg:ivpsolution} to conditionally simulate the probabilistic solution of model (\ref{eqn:IVPsystem}), $(\mathbf{u}';\mathbf{v}')$, given  $u'(t=0), v^*(t=0), \theta,\Psi$;
\STATE Compute the rejection ratio,
\[
\rho  = \;
\frac{q\left\{ u(t=0) \mid u'(t=0)\right\}}{q\left\{u'(t=0) \mid u(t=0)\right\}}\;
\frac{\pi\left\{u(t=0)\right\}}{\pi\left\{u'(t=0)\right\}}\;
\frac{p \left\{ u(t=L)  \mid u^*(t=L)\right\}}{p\left\{u'(t=L) \mid u^*(t=L) \right\}};
\]
\IF{$\min\{ 1, \rho \}  > \mbox{U}[0,1] $ }
\STATE Update $u(t=0) \leftarrow u'(t=0)$;
\STATE Update $(\mathbf{u};\mathbf{v}) \leftarrow (\mathbf{u}';\mathbf{v}')$;
\ENDIF
\STATE Return $(\mathbf{u};\mathbf{v})$.
\ENDFOR
\end{algorithmic}
\end{algorithm}


\begin{algorithm}[!htp]
\caption{Parallel tempering implementation of Algorithm \ref{alg:mbvpsolution} with $C$ chains.}
\label{alg:mbvppartemp}
\begin{algorithmic} 
\STATE Choose the probability $\xi$ of performing a swap move between two randomly chosen chains at each iteration, and define a temperature vector $\gamma \in (0,1]^C$ , where $\gamma_{C} = 1$;

\STATE At time $s_1 = 0$ initialize the unknown boundary value $u(t=0)_{(1)} \sim \pi(\cdot)$ by sampling from the prior $\pi$ and set $u(t=0)_{(c)} = u(t=0)_{(1)}$  for $c=2,\ldots,C$;
\STATE Use Algorithm \ref{alg:ivpsolution} to conditionally simulate $(\mathbf{u};\mathbf{v})_{(1)}$ from the probabilistic solution of model (\ref{eqn:IVPsystem}) given  $u(t=0)_{(1)}, v^*(t=0), \theta,\Psi$ and set $(\mathbf{u};\mathbf{v})_{(c)}= (\mathbf{u};\mathbf{v})_{(1)}$  for $c=2,\ldots,C$;
\FOR{$k = 1:K$}
\IF{$ \xi  > \mbox{U}[0,1] $  }
\STATE Propose a swap between $ i,j \sim q(i,j)$, $1\leq i,j \leq C, i\neq j$, where $q$ is a proposal.
\STATE Compute the rejection ratio,
\[
\rho = \frac{p \left\{ u(t=L)_{(i)} \mid u^*(t=L)\right\}^{\gamma_i}}{p \left\{u(t=L)_{(i)} \mid u^*(t=L) \right\}^{\gamma_j}} \,
\frac{p \left\{ u(t=L)_{(j)} \mid u^*(t=L)\right\}^{\gamma_j}}{p \left\{ u(t=L)_{(j)} \mid u^*(t=L) \right\}^{\gamma_i}};
\] 
\IF{$\min ( 1, \rho )  > \mbox{U}[0,1] $ }
\STATE Swap initial conditions $u(t=0)_{(i)}$ and $u(t=0)_{(j)}$;
\ENDIF
\ENDIF 
\FOR{$c = 1:C$}
\STATE Perform one iteration of Metropolis-Hastings Algorithm \ref{alg:mbvpsolution}, using instead a tempered likelihood to compute the rejection ratio:
 \[
\rho  = \;
\frac{q\left\{u(t=0)_{(c)} \mid u'(t=0)_{(c)}\right\}}{q\left\{u'(t=0)_{(c)} \mid u(t=0)_{(c)}\right\}}\;
\frac{\pi\left\{u(t=0)_{(c)}\right\}}{\pi\left\{u'(t=0)_{(c)}\right\}}\;
\frac{p \left\{ u(t=L)_{(c)}  \mid u^*(t=L)\right\}^{\gamma_c}}{p \left\{u'(t=L)_{(c)} \mid u^*(t=L) \right\}^{\gamma_c}},
\]
 with temperature $\gamma_c$;
\ENDFOR 
\STATE Return $(\mathbf{u};\mathbf{v})_{(C)}$.
\ENDFOR
\end{algorithmic}
\end{algorithm}


\begin{algorithm}[!htp]
\caption{For the ODE delay initial function problem (\ref{eqn:DIFP}), draw one sample from the forward model over grid ${\bf t} = (t_1,\cdots,t_T)$ given $\phi,\tau,\theta, \Psi,$ and a discretization grid ${\bf s} = (s_1,\cdots,s_N)$}
\label{alg:ddesolution}
\begin{algorithmic}
\STATE
\STATE At time $s_1 = 0$, sample $u^0(0-\tau)\sim \phi(0-\tau)$ and initialize the derivative $\mbox{f}_{1}= f \left\{s_1, u^*(0), u^0(0-\tau), \theta\right\}$ and define $m^0, m^0_t, C^0, C^0_t$ as in Section \ref{sec:priors};
\FOR{$n = 1 : N$}
\STATE If $n=1$, set $g_1=C_t^{0}(s_1,s_1) +  C_t^{0}(s_1-\tau,s_1-\tau)+ 2C_t^{0}(s_1,s_1-\tau)$, otherwise set $g_n=C_t^{n-1}(s_n,s_n) + C_t^{n-1}(s_n-\tau,s_n-\tau)+ 2C_t^{n-1}(s_n,s_n-\tau) + r_{n-1}(s_n)$; 
\STATE Compute for each system component,
 \begin{align*}
m^n({\bf s})
&=  m^{n-1}({\bf s})  +  g_n^{-1}
\smallint_0^{{\bf s}}   C_t^{n-1}(z,s_n) \mbox{d}z \, \left\{\mbox{f}_n -  m_t^{n-1}(s_n) \right\},\\
m^n({\bf s}-\tau)
&=  m^{n-1}({\bf s}-\tau)  +  g_n^{-1}
\smallint_0^{{\bf s}-\tau}   C_t^{n-1}(z,s_n) \mbox{d}z \, \left\{\mbox{f}_n -  m_t^{n-1}(s_n) \right\} ,\\
m_t^n({\bf s})
&=  m_t^{n-1}({\bf s}) +  g_n^{-1}  \;
C_t^{n-1}({\bf s},s_n)\left\{\mbox{f}_n -   m_t^{n-1}(s_n) \right\},\\
C^n({\bf s},{\bf s})  & =    C^{n-1}({\bf s},{\bf s})  -  g_n^{-1} \smallint_0^{{\bf s}}  C_t^{n-1}(z,s_n)\mbox{d}z \,  \left\{\smallint_0^{{\bf s}} C_t^{n-1}(z,s_n)\mbox{d}z \right\}^\top,\\
 C_t^n({\bf s},{\bf s})  & =  C_t^{n-1}({\bf s},{\bf s})  -   g_n^{-1}  C_t^{n-1}({\bf s},s_n)\, C_t^{n-1}(s_n,{\bf s}),\\
 \smallint_0^{{\bf s}}  C_t^n(z,{\bf s}) \mbox{d}z  & =  \smallint_0^{{\bf s}}C_t^{n-1}(z,{\bf s}) \mbox{d}z -   g_n^{-1}  \smallint_0^{{\bf s}} C_t^{n-1}(z,s_n) \mbox{d}z \, C_t^{n-1}(s_n,{\bf s}),\\
 \smallint_0^{{\bf s}-\tau}  C_t^n(z,{\bf s}) \mbox{d}z  & =  \smallint_0^{{\bf s}-\tau}C_t^{n-1}(z,{\bf s}) \mbox{d}z -   g_n^{-1}  \smallint_0^{{\bf s}-\tau} C_t^{n-1}(z,s_n) \mbox{d}z \, C_t^{n-1}(s_n,{\bf s});
 \end{align*}

\IF {$n < N$}
\STATE Sample step ahead realization $u^{n}(s_{n+1})$ from the predictive distribution of the state,
\[p\left\{ u(s_{n+1})  \mid \mbox{f}_n, \Psi \right\}
 = \mathcal{N} \left\{  u(s_{n+1})  \mid  m^n(s_{n+1}), C^n(s_{n+1},s_{n+1}) \right\};\]
 
\IF{$s_{n+1} - \tau < 0$}
\STATE {Sample $u^n(s_{n+1}-\tau)  \sim \phi(s_{n+1} - \tau)$},
\ELSE 
\STATE {Sample $u^n(s_{n+1}-\tau)  \sim \mathcal{N} \left\{ m^n(s_{n+1}-\tau), C^n(s_{n+1}-\tau,s_{n+1}-\tau) \right\};$}
\ENDIF 

\STATE Interrogate the model by computing $\mbox{f}_{n+1} ={f}\left\{s_{n+1}, u^{n}(s_{n+1}),u^{n}(s_{n+1}-\tau), \theta \right\}$;

\ENDIF
\ENDFOR
\STATE Return 
 $\mathbf{u} = \left(u(t_1),\cdots,u(t_T)\right) \sim \mathcal{GP}\left\{m^N({\bf t}), C^N({\bf t},{\bf t})\right\}$, where ${\bf t}\subset{\bf s}$.
\end{algorithmic}
\end{algorithm}

 
\begin{algorithm}[!htp]
\caption{Draw $K$ samples from (\ref{eqn:posterior_solution}) with density $p \left\{ \theta, \tau, \phi, \alpha,\lambda, \mathbf{u}  \mid  \mathbf{y}, \Sigma\right\} $ given observations of the transformed solution of delay initial function problem (\ref{eqn:DIFP}).}
\label{alg:DDEInvMH}
\begin{algorithmic} 
\STATE
\STATE Initialize $\theta, \tau, \phi, \alpha, \lambda \sim \pi(\cdot)$ where $\pi$ is the prior density;
\STATE Use Algorithm \ref{alg:ddesolution} to conditionally simulate a realization, $\mathbf{u} = \left(u(t_1),\ldots,u(t_T)\right)$, from the probabilistic solution of ODE delay initial function problem  (\ref{eqn:DIFP}) given $\theta, \tau, \phi,$  and discretization grid ${\bf s}= \left(s_1,\cdots,s_N\right)$;
\FOR{$k = 1:K$}
\STATE Propose $\theta',  \tau', \phi' \sim q(\cdot \mid  \theta, \tau, \phi)$ where $q$ is a proposal density;
\STATE Use Algorithm \ref{alg:ddesolution} to conditionally simulate a vector of realizations, $\mathbf{u}'$, from the forward model given $\theta', \tau', \phi', \alpha, \lambda$; 
\STATE Compute the rejection ratio,
\begin{align*}
& \rho  =
\frac{q(\, \theta\, , \tau , \phi \mid  \theta' , \tau', \phi' \,)}{q(\, \theta'\, , \tau' , \phi' \mid  \theta, \tau, \phi \,)} \;
 \frac{\pi( \,\theta, \tau, \phi\,)}{\pi(\, \theta'\, , \tau' , \phi'\,)} \;
 \frac{p\left(\, \mathbf{y} \mid \mathbf{u}, \theta , \Sigma \,\right)}{p \left(\, \mathbf{y} \mid \mathbf{u}', \theta', \Sigma \,\right)};
\end{align*}
\IF{$\min ( 1, \rho )  > \mbox{U}[0,1] $ }
\STATE Update $\left(\theta, \tau, \phi \right) \leftarrow \left( \theta'\, , \tau' , \phi'\right)$;
\STATE Update $\mathbf{u}\leftarrow \mathbf{u}'$;
\ENDIF
\STATE Propose $\alpha', \lambda' \sim q(\cdot \mid \alpha, \lambda )$ where $q$ is a proposal density;
\STATE Use Algorithm \ref{alg:ddesolution} to conditionally simulate a vector of realizations, $\mathbf{u}'$, of the associated probabilistic solution for the ODE delay initial function problem given $\theta, \tau, \phi, \alpha',\lambda',$  and discretization grid ${\bf s}=  \left(s_1,\cdots,s_N\right)$;

\STATE Compute the rejection ratio,
\begin{align*}
& \rho  =
\frac{q( \alpha, \lambda \mid  \alpha', \lambda' )}{q( \alpha', \lambda' \mid \alpha, \lambda)}
 \frac{\pi( \alpha, \lambda )}{\pi( \alpha', \lambda' )}
 \frac{p\left(\, \mathbf{y} \mid \mathbf{u}, \theta, \Sigma \,\right)}{p \left(\, \mathbf{y} \mid \mathbf{u}', \theta, \Sigma \,\right)};
\end{align*}
\IF{$\min ( 1, \rho )  > \mbox{U}[0,1] $ }
\STATE Update $(\alpha, \lambda) \leftarrow (\alpha', \lambda')$;
\STATE Update $\mathbf{u} \leftarrow \mathbf{u}'$;
\ENDIF
\STATE Return $( \theta, \tau, \phi, \alpha, \lambda, \mathbf{u})$.
\ENDFOR
\end{algorithmic}
\end{algorithm}


\begin{algorithm}[!htp]
\caption{Parallel tempering implementation of Algorithm \ref{alg:DDEInvMH}}
\label{alg:DDEInvPT}
\begin{algorithmic} 
\STATE
\STATE Define probability $\xi$ of performing a swap move between two randomly chosen chains at each iteration, and define a temperature vector $\gamma \in (0,1]^C$, where $\gamma_{C} = 1$;
\STATE Initialize $\left(\theta, \tau, \phi, \alpha, \lambda\right)_{(1)} \sim \pi(\cdot)$ where $\pi$ is the prior density and set $\left(\theta, \tau, \phi, \alpha, \lambda\right)_{(c)} = \left(\theta, \tau, \phi, \alpha, \lambda\right)_{(1)}$ for $c = 2,\ldots,C$;
\STATE Use Algorithm  \ref{alg:ddesolution} to conditionally simulate a realization, $\mathbf{u}_{(1)} = \left(u(t_1),\ldots,u(t_T)\right)_{(1)}$, from the forward model and set $\mathbf{u}_{(c)} =\mathbf{u}_{(1)}$ for $c=2,\ldots,C$;
\FOR{$k = 1:K$}
\IF{$ \xi  > \mbox{U}[0,1] $ } 
\STATE Propose a swap between $ i,j \sim q(i,j)$, $1\leq i,j \leq C, i\neq j$, where $q$ is a proposal.
\STATE Compute the rejection ratio,
\[\rho =  \frac{p \left(\, \mathbf{y} \mid \mathbf{u}_{(i)},\theta_{(i)}, \Sigma \,\right)^{\gamma_i}}{p\left(\, \mathbf{y} \mid   \mathbf{u}_{(i)}, \theta_{(i)}, \Sigma \,\right)^{\gamma_j}}\cdot \frac{p \left(\, \mathbf{y} \mid  \mathbf{u}_{(j)},\theta_{(j)}, \Sigma \,\right)^{\gamma_j}}{p\left(\, \mathbf{y} \mid  \mathbf{u}_{(j)}, \theta_{(j)}, \Sigma \,\right)^{\gamma_i}};\]

\IF{$\min ( 1, \rho )  > \mbox{U}[0,1] $ }
\STATE Swap parameter vectors $\left(\theta, \tau, \phi, \alpha, \lambda\right)_{(i)} \leftrightarrow \left(\theta, \tau, \phi, \alpha, \lambda\right)_{(j)}$;
\ENDIF\\
\ENDIF \\\
\FOR{$c = 1:C$}
\STATE Perform one iteration of Metropolis-Hastings Algorithm \ref{alg:DDEInvMH}, using instead a tempered likelihood to compute the rejection ratios:
\[ \rho  =
\frac{q(\, \theta_{(c)} \, , \tau_{(c)} , \phi_{(c)} \mid  \theta_{(c)}' , \tau_{(c)}', \phi_{(c)}' \,)}{q(\, \theta_{(c)}'\, , \tau_{(c)}' , \phi_{(c)}' \mid  \theta_{(c)}, \tau_{(c)}, \phi_{(c)} \,)} \;
 \frac{\pi( \,\theta_{(c)}, \tau_{(c)}, \phi_{(c)}\,)}{\pi(\, \theta_{(c)}'\, , \tau_{(c)}' , \phi_{(c)}'\,)} 
 \cdot  \frac{p\left(\, \mathbf{y} \mid  \mathbf{u}_{(c)}, \theta_{(c)}, \Sigma \,\right)^{\gamma_c}}{p \left(\, \mathbf{y} \mid \mathbf{u}_{(c)}', \theta_{(c)}', \Sigma \,\right)^{\gamma_c}},\]
and,
\[ \rho  =
\frac{q( \alpha_{(c)}, \lambda_{(c)} \mid  \alpha_{(c)}', \lambda_{(c)}' )}{q( \alpha_{(c)}', \lambda_{(c)}' \mid \alpha_{(c)}, \lambda_{(c)})}
 \frac{\pi( \alpha_{(c)}, \lambda_{(c)} )}{\pi( \alpha_{(c)}', \lambda_{(c)}' )}
\frac{p\left(\, \mathbf{y} \mid  \mathbf{u}_{(c)}, \theta_{(c)}, \Sigma \,\right)^{\gamma_{c}}}{p \left(\, \mathbf{y} \mid \mathbf{u}_{(c)}',\theta_{(c)}, \Sigma \,\right)^{\gamma_{c}}},\]
with temperature $\gamma_c$;
\ENDFOR \\
\STATE Return $\left(\theta, \tau, \phi, \alpha, \lambda, \mathbf{u}\right)_{(C)}$.\\
\ENDFOR
\end{algorithmic}
\end{algorithm}


\begin{algorithm}[htp!]
\caption{Sampling from the formward model evaluated over the grid ${\bf X}\otimes{\bf T}$ 
 for the heat equation (\ref{eqn:HeatPDE}), given $\kappa, \Psi, N,M$. }
\label{alg:pdesolution}
\begin{algorithmic}  
\STATE Define temporal discretization grid, ${\bf s} = (s_1,\cdots,s_N)$, and for each $s_n$ define spatial discretization grid   ${\bf z}_n = (z_{1,n},\cdots,z_{M,n}), 1 \leq n \leq N$, and construct the $N\times M$ design matrix ${\bf Z} = ({\bf z}_1; \cdots ; {\bf z}_N)$;

\STATE At time $s_1 = 0$ initialize the second spatial derivative using the boundary function and compute the temporal derivative ${\bf f}_1 = \kappa\, [u_{xx}(z_{1,1}, s_1),\cdots,u_{xx}(z_{M,1}, s_1)]$;

\STATE Define the prior covariance as in Section \ref{sec:HeatPDE};

\FOR{$n = 1 : N$}
\STATE If $n=1$, set $G_1=C_{t}^{0}([{\bf z_1},s_1],[{\bf z_1},s_1])$, otherwise set $G_n=C_{t}^{n-1}([{\bf z}_n,s_n],[{\bf z}_n,s_n]) + r_{n-1}\left([{\bf z}_n,s_n]\right)$; 

\STATE Compute,
 \begin{align*}
m_{xx}^n({\bf Z},{\bf s})
&=  m_{xx}^{n-1}({\bf Z},{\bf s}) +  \tfrac{\partial^2}{\partial {\bf Z}^2}\smallint_0^{{\bf s}} C_{t}^{n-1}([{\bf Z},t],[{\bf z}_n,s_{n}])\mbox{d}t \, G_n^{-1} \left\{{\bf f}_n -  m_{t}^{n-1}({\bf z}_n,s_n) \right\},\\
m_{t}^n({\bf Z},{\bf s})
&=  m_{t}^{n-1}({\bf Z},{\bf s}) +   C_{t}^{n-1}([{\bf Z},{\bf s}],[{\bf z}_n,s_{n}]) \, G_n^{-1} \left\{{\bf f}_n -  m_{t}^{n-1}({\bf z}_n,s_n) \right\},\\
  C_{xx}^{n}([{\bf Z},{\bf s}],[{\bf Z},{\bf s}])  
 & =  C_{xx}^{n-1}([{\bf Z},{\bf s}],[{\bf Z},{\bf s}])  \\
 &  - \left\{\tfrac{\partial^2}{\partial {\bf Z}^2}\smallint_0^{{\bf s}}  C_{t}^{n-1}([{\bf Z},t],[{\bf z}_n,s_n]) \mbox{d}t\right\}\, G_n^{-1} \left\{ \tfrac{\partial^2}{\partial {\bf Z}^2}\smallint_0^{{\bf s}}  C_{t}^{n-1}([{\bf Z},t],[{\bf z}_n,s_n]) \mbox{d}t\right\}^\top,\\
 \tfrac{\partial^2}{\partial {\bf Z}^2}\smallint_0^{{\bf s}} C_{t}^{n}([{\bf Z},t],[{\bf Z},{\bf s}]) \mbox{d}t  
& = \tfrac{\partial^2}{\partial {\bf Z}^2}\smallint_0^{{\bf s}} C_{t}^{n-1}([{\bf Z},t],[{\bf Z},{\bf s}]) \mbox{d}t  \\
&     - \left\{\tfrac{\partial^2}{\partial {\bf Z}^2}\smallint_0^{{\bf s}} C_{t}^{n-1}([{\bf Z},t],[{\bf z}_n, s_n]) \right\} \mbox{d}t \, G_n^{-1}  C_{t}^{n-1}([{\bf z}_n,s_n],[{\bf Z},{\bf s}]),\\
 C_{t}^{n}([{\bf Z},{\bf s}],[{\bf Z},{\bf s}])  
& =  C_{t}^{n-1}([{\bf Z},{\bf s}],[{\bf Z},{\bf s}])  -     C_{t}^{n-1}([{\bf Z},{\bf s}],[{\bf z}_n,s_n]) \, G_n^{-1} C_{t}^{n-1}([{\bf z}_n,s_n],[{\bf Z},{\bf s}]);
 \end{align*}
 
\IF { $n < N$}
\STATE Sample one-time-step-ahead realization of the second spatial derivative of the state, $u_{xx}({\bf z}_n,s_{n+1})$ from the predictive distribution,
\begin{align*}
&p\left( [ u_{xx}(z_{1,n},s_{n+1}),\cdots, u_{xx}(z_{M,n},s_{n+1})]  \mid {\bf f}_n, \kappa, \Psi \right)\\
&= \mathcal{N} \left(  [ u_{xx}(z_{1,n},s_{n+1}),\cdots, u_{xx}(z_{M,n},s_{n+1})]  \mid  m_{xx}^{n}({\bf z}_n,s_{n+1}), C_{xx}^{n}([{\bf z}_n, s_{n+1}], [{\bf z}_n,s_{n+1}]) \right);
\end{align*} 
and interrogate the PDE model by computing   ${\bf f}_{n+1} = \kappa\, [u_{xx}(z_{1,n+1}, s_{n+1}),\cdots,u_{xx}(z_{M,n+1}, s_{n+1})]$ 
\ENDIF
\ENDFOR 
\STATE Return 
$\mathbf{U} = \begin{bmatrix}
u(x_{1,1},t_{1}) & \cdots & u(x_{X,1},t_{1})\\
\vdots & \vdots & \vdots\\
u(x_{1,1},t_{T}) & \cdots & u(x_{X,T},t_{T})\\
\end{bmatrix}
\sim \mathcal{GP}\left(m^N({\bf X},{\bf T}), C^N([{\bf X},{\bf T}],[{\bf X},{\bf T}])\right),$ where ${\bf X}\subset{\bf Z}$ and ${\bf T}_{\cdot,1}\subset{\bf s}$.
\end{algorithmic}
\end{algorithm}

\newpage

\section{Computational Comparison}
\label{sec:compcomparison}

Computational comparison between a one-step-ahead probabilistic algorithm (UQDES-1) and the one-step explicit Euler method (Euler) are provided under difference discretization grids and regimes for the initial value ODE problem (\ref{eqn:toyIVPsystem}).  As discussed in the paper, a probabilistic formalism can yield higher order algorithms by varying the sampling scheme.  Therefore, comparison to higher-order numerical methods, such as Runge-Kutta, were outside the scope of our paper.   Figure \ref{fig:timecomparison} shows the computation times as well as integrated log mean square error (IMSE) for each choice of solver.  The improvement in log IMSE under appropriate length-scale specification illustrates the advantage of the probabilistic method within the inverse problem, where this hyperparameter can be estimated.

\setcounter{figure}{9}

\begin{figure}
\centering
\includegraphics[width=\textwidth]{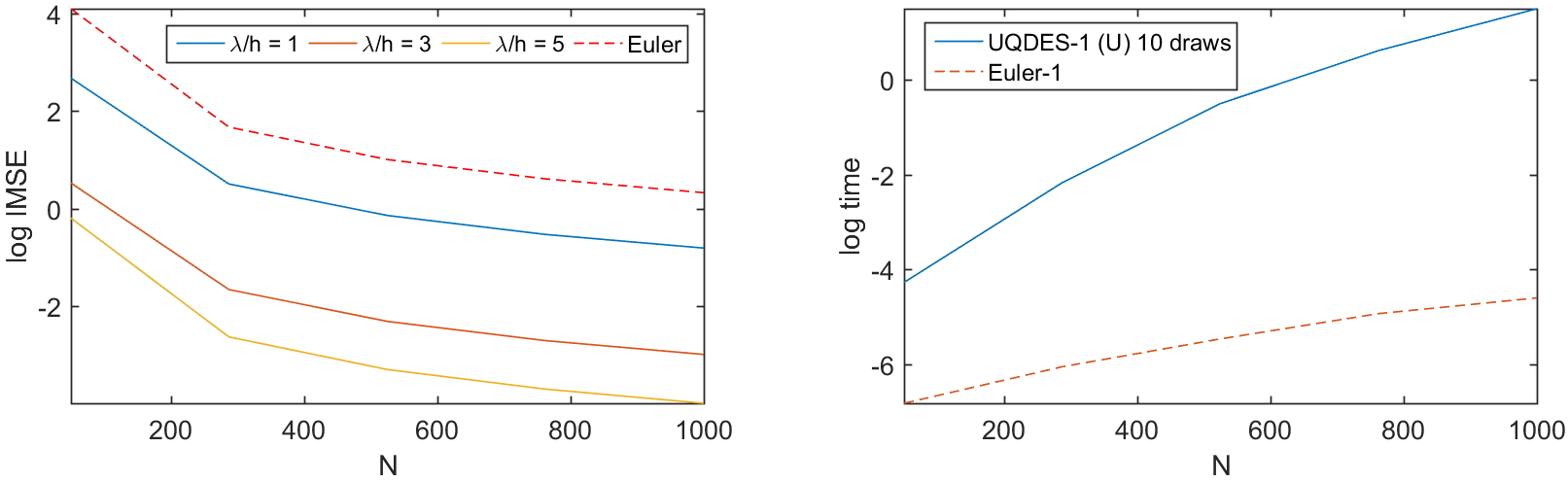}
\caption{\label{fig:timecomparison} Left: log integrated mean squared error is shown for the one-step Euler method (dashed line) and for the mean one-step-ahead probabilistic solution (solid lines) against the discretization grid size (N) for different length-scales expressed as proportions of the time step.  Right: log computational time against grid size is shown for one iteration of the one-step Euler method (red) and 100 iterations of the probabilistic one-step-ahead solver with squared exponential covariance (blue).   }
\end{figure}

\section{Derivations and Proofs}

\subsection{Probabilistic solution as latent function estimation}\label{sec:latentupdate}

We take a process convolution view of the model presented in Section \ref{sec:probalgorithm}.  Let $\mathcal{F}$ 
 be the space of square-integrable random functions and $\mathcal{F}^*$ be its dual space of linear functionals.
The solution and its derivative will be modelled by an integral transform using the linear continuous operators $ R $ and $ Q $ defining a mapping from $\mathcal{F}$ to $\mathcal{F}^*$.  The associated kernels are the deterministic, square-integrable function $R_\lambda: \mathbb{R}\times\mathbb{R} \to \mathbb{R}$ and its integrated version $Q_\lambda(t_j,t_k) = \smallint_0^{t_k} R_\lambda(z,t_k)\mbox{d}z$.
The operators $ R ,  Q ,  R ^\dagger,  Q ^\dagger$ are defined, for $u\in\mathcal{F}$ and $v\in\mathcal{F}^*$, as $ R u(t) = \smallint R_\lambda(t,z) u(z) \mathrm{d}z$ and   $ Q u(t) = \smallint Q_\lambda(t,z) u(z) \mathrm{d}z$, with adjoints $ R ^\dagger v(t) = \smallint R_\lambda(z,t) v(z) \mathrm{d}z$ and  $ Q ^\dagger v(t) = \smallint Q_\lambda(z,t) v(z) \mathrm{d}z$ respectively.

Estimating the solution of a differential equation model may be restated as a problem of inferring an underlying latent process $\zeta\in\mathcal{F}$ from noisy realizations.  We consider the white noise process,
$\zeta \sim \mathcal{N}(0,K)$,  with covariance $K(t_j,t_k)=\alpha^{-1}\delta_{t_j}(t_k)$.  
We model the derivative of the solution as the integral transform,
\begin{align}
 u_t(t_j) =  m_t^n(t_j) +  R  \zeta(t_j), \quad t_j\in [0, L],\,L>0, \, 0 \leq n \leq N.
\label{eqn:derivconv}
\end{align}

\noindent The differential equation solution model at time $t_j \leq b$ is then obtained by integrating the derivative $u_t$ with respect to time over the interval $[0,t_j]$ as follows,
\begin{align}
u(t_j) = \int_0^{t_j}u_t(z) \mbox{d}z= m^n(t_j) +  Q \zeta(t_j), \quad t_j\in [0, L],\,L>0, \, 0 \leq n \leq N.
\label{eqn:stateconv}
\end{align}

\noindent Let $\mathbf{f}_{1:n}=\left(\mbox{f}_1,\ldots,\mbox{f}_n\right)$ and omit dependence on $\theta$ and $\Psi$, which are fixed.  We wish to update the prior $\left[u \mid \mathbf{f}_{1:n-1} \right]$ given a new model interrogation, $\mbox{f}_n$, to obtain the updated process $\left[u \mid \mathbf{f}_{1:n} \right]$, where $1\leq n \leq N$.

\begin{lem}[Updating]
\label{lem:gaussiansolution}
The stochastic process $\{u(t) \mid \mathbf{f}_{1:n-1}, t\in[0,L]\}$ and its time derivative $\{u_t(t) \mid \mathbf{f}_{1:n-1}, t\in[0,L]\}$ are well-defined and distributed according to a Gaussian probability measure with marginal mean functions, covariance and cross-covariance operators given by,
 \begin{align*}
m^n(t_j)
&=  m^{n-1}(t_j)  +  g_n^{-1}\left\{\mbox{f}_n -  m_t^{n-1}(s_n) \right\}
\smallint_0^{t_j}   C_t^{n-1}(z,s_n) \mbox{d}z,\\
m_t^n(t_j)
&=  m_t^{n-1}(t_j) +  g_n^{-1} \left\{\mbox{f}_n -   m_t^{n-1}(s_n) \right\} \;
C_t^{n-1}(t_j,s_n),\\
 C^n(t_j,t_k)  & =    C^{n-1}(t_j,t_k)  -  g_n^{-1} \smallint_0^{t_j}  C_t^{n-1}(z,s_n)\mbox{d}z \,  \smallint_0^{t_k} C_t^{n-1}(s_n,z)\mbox{d}z ,\\
 C_t^n(t_j,t_k)  & =  C_t^{n-1}(t_j,t_k)  -   g_n^{-1}  C_t^{n-1}(t_j,s_n)\, C_t^{n-1}(s_n,t_k),\\
\smallint_0^{t_j}  C_t^n(z,t_k) \mbox{d}z  & =  \smallint_0^{t_j}C_t^{n-1}(z,t_k) \mbox{d}z -   g_n^{-1}  \smallint_0^{t_j} C_t^{n-1}(z,s_n) \mbox{d}z \, C_t^{n-1}(s_n,t_k),\\
 \smallint_0^{t_k} C_t^n(t_j,z) \mbox{d}z  & = \left\{\smallint_0^{t_k}  C_t^n(z,t_j) \mbox{d}z \right\}^\top
 \end{align*}
\noindent where,
\[ g_n : = \left\{ \begin{array}{ll}
         C_t^{0}(s_1,s_1) & \mbox{if $n = 1$},\\
        C_t^{n-1}(s_n,s_n) + r_{n-1}(s_n) & \mbox{if $n > 1$}, \end{array} \right. \] 
and where $m^0$ and $m_t^0$ are the prior means and $ C^0$ and $ C_t^0$ the prior covariances of the state and derivatives defined in Section \ref{sec:priors}.
\end{lem}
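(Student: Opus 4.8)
The plan is to recognise each update as a single step of Gaussian conditioning on a scalar, noisily observed linear functional of the prior process $\mu^{n-1}$, and then to check that the resulting mean and covariance operators are exactly those displayed. First I would record the joint law. Under $\mu^{n-1}$ the pair $(u_t(\cdot),u(\cdot))$ is a Gaussian process, and by construction (equations (\ref{eqn:derivconv})--(\ref{eqn:stateconv})) the state is the time integral of the derivative, so the state covariance is the double time integral of $C_t^{n-1}$ and the state--derivative cross covariance is its single integral, $\mathrm{Cov}\{u(t_j),u_t(s_n)\}=\smallint_a^{t_j}C_t^{n-1}(z,s_n)\,\mbox{d}z$. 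The error model (\ref{eqn:error_model_step_n}) makes $f_n$, jointly with $(u_t,u)$, Gaussian: for $n>1$ we have $f_n=u_t(s_n)+\epsilon_n$ with $\epsilon_n\sim\mathcal{N}\{0,C_t^{n-1}(s_n,s_n)\}$ independent of $\mu^{n-1}$, so $\mathrm{Var}(f_n)=2C_t^{n-1}(s_n,s_n)=g_n$; for $n=1$ the interrogation $f_1=u_t^*(s_1)$ is observed exactly, the degenerate-noise case with $\mathrm{Var}(f_1)=C_t^0(s_1,s_1)=g_1$. The remaining cross covariances are $\mathrm{Cov}\{u_t(t_j),f_n\}=C_t^{n-1}(t_j,s_n)$ and $\mathrm{Cov}\{u(t_j),f_n\}=\smallint_a^{t_j}C_t^{n-1}(z,s_n)\,\mbox{d}z$.

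Next I would apply the standard Gaussian conditioning identities --- conditional mean equals prior mean plus (cross covariance)$\times$(marginal variance)$^{-1}\times$(innovation), conditional covariance equals prior covariance minus (cross covariance)$\times$(marginal variance)$^{-1}\times$(transpose) --- with the quantities identified above. Substituting gives, line by line, the stated formulas for $m_t^n$, $m^n$, $C_t^n$, $C^n$ and the cross covariance $\smallint_a^{t_j}C_t^n(z,t_k)\,\mbox{d}z$; the last identity $\smallint_a^{t_k}C_t^n(t_j,z)\,\mbox{d}z=\{\smallint_a^{t_k}C_t^n(z,t_j)\,\mbox{d}z\}^{\top}$ is just the adjoint relation, which survives because the update is a symmetric rank-one correction. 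One point worth stressing is that integrated-derivative consistency $u(t_j)=\smallint_a^{t_j}u_t(z)\,\mbox{d}z$ is preserved by conditioning, since conditioning on $f_n$ acts affinely on the process and therefore commutes with integration in $t$; this is why $C^n$ remains the double integral of $C_t^n$ and the cross covariance remains its single integral, so these two formulas need no verification beyond the derivative update itself.

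For well-definedness I would argue two things. First, non-degeneracy: positive-definiteness of the kernel $R_\lambda$ (conditions (\ref{eqn:as1})--(\ref{eqn:as2}) of Appendix \ref{sec:covarianceproperties}) gives $C_t^{n-1}(s_n,s_n)>0$ for every $n$, hence $g_n>0$ and $g_n^{-1}$ exists, so the conditioning --- including the disintegration at $n=1$ onto $\{u_t(s_1)=f_1\}$, whose marginal is a non-degenerate Gaussian --- is legitimate. Second, the process-convolution representation with square-integrable $R_\lambda$ and its primitive $Q_\lambda$ ensures $R\zeta$ and $Q\zeta$, and their conditioned counterparts, are well-defined Gaussian elements of $\mathcal{F}^*$ and $\mathcal{F}$ with jointly continuous covariance, so $\{u_t^n(t)\}$ and $\{u^n(t)\}$ are genuine Gaussian processes on $[a,b]$. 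The lemma then follows by induction on $n$: the base case $n=1$ is the conditioning of the prior $\mu^0$ of Section \ref{sec:priors} described above, and the inductive step is precisely the computation of the two preceding paragraphs with $\mu^{n-1}$ in place of the prior. The main obstacle is not the Gaussian algebra, which is routine, but making the function-space statements precise: that conditioning an infinite-dimensional Gaussian measure on a point-evaluation functional of $u_t$ yields a Gaussian measure whose covariance operator is the claimed rank-one perturbation, and that the integrated-derivative consistency is retained at the level of measures, not merely finite-dimensional marginals. I would handle this via the Cameron--Martin/RKHS characterisation of Gaussian conditioning together with continuity of point evaluation guaranteed by the kernel assumptions.
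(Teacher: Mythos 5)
Your proposal is correct and follows essentially the same route as the paper: both treat the update as Gaussian conditioning of the joint law of $(u_t,u,\mbox{f}_n)$ on the noisily observed derivative evaluation at $s_n$ (with $g_1=C_t^0(s_1,s_1)$ as the exact-observation case and $g_n=2C_t^{n-1}(s_n,s_n)$ otherwise), reading off the stated mean, covariance, and cross-covariance formulas from the standard conditioning identities. The only difference is bookkeeping: the paper phrases the argument through the explicit white-noise process-convolution representation $u_t=m_t^{n-1}+R\zeta$, $u=m^{n-1}+Q\zeta$ and justifies well-definedness by forming the joint Gaussian element of $\mathcal{F}^*\oplus\mathbb{R}$ and invoking Theorem 6.20 of \cite{Stuart2010}, which is the same infinite-dimensional conditioning machinery you propose to access via the Cameron--Martin/RKHS characterisation.
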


\begin{proof}
We are interested in the conditional distribution of the state $u(t) - m^{n-1}(t)\in\mathcal{F}^*$ and time derivative $u_t(t) - m_t^{n-1}(t) \in\mathcal{F}^*$ given a new model interrogation on a mesh vertex $s_n\in[0, L],L>0$ under the Gaussian error model,
\[\mbox{f}_n - m_t^{n-1}(s_n) =  R \zeta(s_n) + \eta(s_n),\]
where $\eta(s_n)\sim \mathcal{N}(0, \gamma_n)$.

Construct the vector 
\[ \left( u_t - m_t^{n-1},\mbox{f}_n - m_t^{n-1}(s_n) \right)
= \left(  R \zeta,  R \zeta(s_n) + \eta(s_n) \right)
\in\mathcal{F}^*\oplus\, \mathbb{R},\]
where the first element is function-valued and the second element is a scalar.   This vector is jointly Gaussian with mean $M = (0,0)$ and covariance operator $C$ with positive definite cross-covariance operators,
\begin{align}
\begin{array}{ll}
&C_{11} =  R K R ^\dagger 
\quad\quad
C_{12} =   R K R ^\dagger\\
&C_{21} =   R K R ^\dagger \quad\quad
C_{22} =   R K R ^\dagger + \gamma_n.
\end{array}
\label{eqn:cross_covs}
\end{align}

\noindent Since both $\mathcal{F}^*$ and $\mathbb{R}$ are separable Hilbert spaces, it follows from Theorem 6.20 in \cite{Stuart2010} that the random variable $[ u_t - m_t^{n-1} \mid \mbox{f}_n -m_t^{n-1}(s_n) ]$ is well-defined and distributed according to a Gaussian probability measure with mean and covariance,
\begin{align*}
\mbox{E} \left( u_t - m_t^{n-1} \mid \mbox{f}_n - m_t^{n-1}(s_n) \right)
& = C_{12}C_{22}^{-1} (\mbox{f}_n-m_t^{n-1}(s_n)),\\
\mbox{Cov}\left(u_t - m_t^{n-1} \mid \mbox{f}_n - m_t^{n-1}(s_n) \right)
& = C_{11} - C_{12}C_{22}^{-1}C_{21}.
\end{align*}

Similarly, consider the vector 
$\left( u - m^{n-1},\mbox{f}_n - m_t^{n-1}(s_n) \right)
= \left(  Q \zeta,  R \zeta(s_n) + \eta(s_n) \right)
\in\mathcal{F}^*\oplus\, \mathbb{R}$, with mean 
$M = (0,0)$ 
and cross-covariances,
\begin{align*}
\begin{array}{ll}
&C_{11} =  Q K Q ^\dagger 
\quad\quad
C_{12} =   Q K R ^\dagger\\
&C_{21} =   R K Q ^\dagger \quad\quad
C_{22} =   R K R ^\dagger + \gamma_n.
\end{array}
\end{align*}

\noindent By Theorem 6.20 \citep{Stuart2010}, the random variable $[ u - m^{n-1} \mid \mbox{f}_n -m_t^{n-1}(s_n) ]$ is well-defined and distributed according to a Gaussian probability measure with mean and covariance,
\begin{align*}
\mbox{E} \left( u - m^{n-1} \mid \mbox{f}_n - m_t^{n-1}(s_n) \right)
& = C_{12}C_{22}^{-1} (\mbox{f}_n-m_t^{n-1}(s_n)),\\
\mbox{Cov}\left( u - m^{n-1} \mid \mbox{f}_n - m_t^{n-1}(s_n) \right)
& = C_{11} - C_{12}C_{22}^{-1}C_{21}.
\end{align*} 
Cross-covariances are found analogously.

\end{proof}
\vspace{-0.5cm}
The kernel convolution approach to defining the covariances guarantees that the cross-covariance operators,  (\ref{eqn:cross_covs}), between the derivative and $n$ derivative model realizations, are positive definite.

\subsection{Proof of Theorem 3.1}\label{sec:consistency}

Denote the exact solution satisfying (\ref{eqn:IVPsystem}) by $u^*(t)$.  For clarity of exposition, we assume that $u(0) = 0$ and  let $m^0(t)=0$ for $t\in[0, L], L>0$. We define $h =\max_{n=2,\ldots,N}\left(s_{n}-s_{n-1}\right)$ to be the maximum step length between subsequent discretization grid points.  We would like to show that the forward model $\{u(t) \mid N , t\in[0,L] \}$ (sampled via  Algorithm \ref{alg:ivpsolution}) with $r_n(s)=0, s\in[0,L]$ converges in $L^1$ to $u^*(t)$ as $h \to 0$ and $\lambda,\alpha^{-1} = O(h)$, concentrating at a rate proportional to $h$.

Let $\mathbf{f}_{1:n}=\left(\mbox{f}_1,\ldots,\mbox{f}_n\right)$ and omit dependence on $\theta$ and $\Psi$, which are fixed.  For $t\in[0,L]$ we find $n$ such that $t\in[s_{n},s_{n+1}]$, and wish to bound the expected absolute difference $\beta_n(t)$ between the exact solution and a single mixture component of (\ref{eqn:posterior_solution}).  Let $\Phi$ be the standard normal cdf, and $A = \mbox{E}\left( u(t) - u^*(t)\mid \mathbf{f}_{1:n}\right)/\sqrt{ C^n(t,t)}$.
\begin{align}
\beta_n(t) &= \mbox{E}\left( |u(t) - u^*(t) | \mid \mathbf{f}_{1:n} \right) \nonumber\\
 &= \mbox{E}\left( u(t) - u^*(t) \mid \mathbf{f}_{1:n} \right) \left\{ 1 - 2\Phi \left( -A \right)\right\} 
 + \sqrt{\frac{2}{\pi}\, C^n(t,t)} \; \exp \left\{-A^2/2 \right\}, \nonumber\\
& \leq \big| \mbox{E}\left( u(t) - u^*(t) \mid \mathbf{f}_{1:n} \right) \big| + O(h^2)
\label{eqn:foldnorm}
\end{align}
The second equality uses the expectation of a folded normal distribution \citep[][]{LeoneEtAl1961} and Lemma \ref{lem:covariance}.  Next bound the first term of (\ref{eqn:foldnorm}) as follows,
\begin{align}
&
\mbox{E} \left(u(t) - u^*(t) \mid \mathbf{f}_{1:n} \right)\nonumber\\
& = \mbox{E} \left( u(s_n) - u^*(s_n)\mid \mathbf{f}_{1:n} \right) + (t-s_n) \mbox{E} \left( u_t(s_n) - u_t^*(s_n)  \mid \mathbf{f}_{1:n} \right)+ O(h^2)\nonumber\\
& = \mbox{E} \left( u(s_n)  - u^*(s_n) \mid \mathbf{f}_{1:n-1} \right)
 + \frac{ C^{n-1}(s_n,s_n)}{ C_t^{n-1}(s_n,s_n)}
\mbox{E} \left( \mbox{f}_n - u_t(s_n) \mid \mathbf{f}_{1:n-1} \right)  \nonumber\\
& \quad\quad  + (t-s_n) \mbox{E} \left( \mbox{f}_n - u_t^*(s_n) \mid \mathbf{f}_{1:n-1} \right) + O(h^2).
\label{eqn:meandiff}
\end{align}   
The $n$th updated process is mean-square differentiable on [0,L], so a Taylor expansion around $s_n$ gives us the first equality, and the recursion from Lemma \ref{lem:gaussiansolution} gives us the second equality.  Next, we use Jensen's inequality to obtain,
\begin{align*}
\begin{array}{ll}
\beta_n(t) 
 \leq \beta_{n-1}(s_n)  + |t-s_n| \; \mbox{E} \left( |\mbox{f}_n - u_t^*(s_n) | \mid \mathbf{f}_{1:n-1} \right)  + \frac{ C^{n-1}(s_n,s_n)}{ C_t^{n-1}(s_n,s_n)} \; \mbox{E}\left( |\mbox{f}_n - u_t(s_n) | \mid \mathbf{f}_{1:n-1} \right)  + O(h^2).
\end{array}
\end{align*}
 By the Lipschitz continuity of $f$, boundedness of $\mbox{E}\left( |\mbox{f}_n - u_t(s_n)| \mid \mathbf{f}_{1:n-1} \right)$, and recursive construction of the covariance, we obtain,
\begin{align*}
\begin{array}{ll}
\beta_n(t) & \leq \beta_{n-1}(s_n)  + L|t-s_n| \beta_{n-1}(s_n)  
 + O(h^2), \\
& = \beta_{n-1}(s_n)\left(1  + L|t-s_n|\right)  
+ O(h^2).
\end{array}
\end{align*}

\noindent It can be shown 
\citep[e.g.][p.67-68]{Butcher2008} that the following inequality holds:
\begin{align*}
\beta_n(t)
& \leq  
 \left\{ 
  \begin{array}{l l}
   \beta_0(s_1) + hB(t-a), & \quad L = 0,   \\
  \exp \{ (t-a)L\} \beta_0(s_1)  + \exp \{ (t-a)L-1\} hB/L, & \quad L > 0,   \\
  \end{array} \right.
\end{align*}
where $B$ is the constant upper bound on all the remainders.  This expression tends to $0$ as $\alpha^{-1},\lambda,h\to 0$, since $ \beta_0(s_1) =0$.  Then, taking the expectation of $\beta_n(t)$ with respect to $\mathbf{f}=(\mbox{f}_1,\ldots,\mbox{f}_N)$, we obtain,
\begin{align*}
\mbox{E} \left( |u(t) - u^*(t)|  \mid  N \right)
= O(h), \quad \mbox{as }\alpha^{-1},\lambda\to 0.
\end{align*}
Thus, the probabilistic solution (\ref{eqn:posterior_solution}) converges in $L^1$ to $u^*(t)$ at the rate $O(h)$.  Note that the assumption that auxiliary parameters, $\lambda$ and $\alpha^{-1}$, associated with the solver tend to zero with the step size is analogous to maintaining a constant number of steps in a $k$-step numerical method regardless of the step size.

\subsection{Properties of the covariance}\label{sec:covarianceproperties}

In this section we present some results regarding covariance structures that are used in the proof of Theorem \ref{thm:meanconvergence}.

\begin{lem}
\label{lem:covinequality}
For $1<n\leq N$ and $t\in[0,L], L>0$, the variances for the state and derivative obtained sequentially via Algorithm \ref{alg:ivpsolution} satisfy:
\begin{align*}
C^n(t,t)  & \leq  C^{1}(t,t), \\
 C_t^n(t,t)  & \leq  C_t^1(t,t). 
\end{align*}
\end{lem}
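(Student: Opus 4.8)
The plan is to exploit the recursive structure of the Bayesian updates established in Lemma~\ref{lem:gaussiansolution}: each update subtracts a nonnegative quantity from the diagonal of the covariance operators, so the state and derivative variances are monotonically nonincreasing in the update index $n$. The two asserted inequalities then follow by iterating this monotonicity down from $n$ to $1$. (Intuitively this is just the statement that conditioning on more information cannot increase posterior variance, but the explicit update formulas make it immediate.)

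Concretely, first I would specialize the covariance recursions of Lemma~\ref{lem:gaussiansolution} to the diagonal. Setting $t_j = t_k = t$ gives
\[
C_t^n(t,t) = C_t^{n-1}(t,t) - g_n^{-1}\bigl(C_t^{n-1}(t,s_n)\bigr)^2,
\qquad
C^n(t,t) = C^{n-1}(t,t) - g_n^{-1}\Bigl(\textstyle\int_a^{t} C_t^{n-1}(z,s_n)\,\mathrm{d}z\Bigr)^2 .
\]
Second, I would note that $g_n = 2\,C_t^{n-1}(s_n,s_n) > 0$ for $2 \le n \le N$, since the predictive derivative variance at $s_n$ is strictly positive; this is precisely the positive-definiteness of the cross-covariance operators remarked upon after the proof of Lemma~\ref{lem:gaussiansolution}, which holds for any kernel $R_\lambda$ that is not everywhere zero. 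Hence both subtracted terms are nonnegative, so $C_t^n(t,t) \le C_t^{n-1}(t,t)$ and $C^n(t,t) \le C^{n-1}(t,t)$ for every such $n$ and every $t \in [a,b]$.

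Third, I would chain these one-step inequalities: for $1 < n \le N$,
\[
C_t^n(t,t) \le C_t^{n-1}(t,t) \le \cdots \le C_t^{1}(t,t),
\qquad
C^n(t,t) \le C^{n-1}(t,t) \le \cdots \le C^{1}(t,t),
\]
which is exactly the claim. I do not anticipate a genuine obstacle here: the only point requiring care is the strict positivity of $g_n$ so that $g_n^{-1}$ is well defined, and this is already guaranteed by the standing assumption on $R_\lambda$; everything else is a direct consequence of the explicit update formulas.
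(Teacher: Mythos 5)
Your proposal is correct and follows essentially the same argument as the paper: specialize the covariance recursion of Lemma \ref{lem:gaussiansolution} to the diagonal, observe that the subtracted term $g_n^{-1}C_t^{n-1}(t,s_n)C_t^{n-1}(s_n,t)$ (respectively its integrated analogue) is nonnegative, and chain the one-step inequalities down to $n=1$. Your explicit remark on the strict positivity of $g_n$, ensured by the positive-definiteness condition on $R_\lambda$, is a small but welcome addition that the paper leaves implicit.
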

\begin{proof}
We use the fact that $ C_t^n(t,t)\geq 0$ for all $n$ and the recurrence from Lemma \ref{lem:gaussiansolution}, we obtain:
\begin{align*}
 C_t^n(t,t)  & =  C_t^{n-1}(t,t)  -  g_n^{-1}  C_t^{n-1}(t,s_n) \,  C_t^{n-1}(s_n,t) ,\\
& \leq   C_t^{n-1}(t,t) \leq \cdots \leq  C_t^{1}(t,t).
\end{align*}
Similarly,
\begin{align*}
 C^n(t,t)  & =  C^{n-1}(t,t)  -  g_n^{-1} \smallint_0^{t}  C_t^{n-1}(z,s_n)\mbox{d}z \,  \smallint_0^{t} C_t^{n-1}(s_n,z)\mbox{d}z ,\\
& \leq   C^{n-1}(t,t) \leq \cdots \leq  C^{1}(t,t).
\end{align*}
\end{proof}
\begin{lem}
\label{lem:covariance}
The covariances, $ C^n(t_j,t_k)$ and $ C_t^n(t_j,t_k)$, obtained at each step of Algorithm \ref{alg:ivpsolution} tend to zero at the rate $O(h^4)$, as $h\to 0$ and $\lambda, \alpha^{-1} = O(h)$ if the covariance function $ R_\lambda$ is stationary and satisfies:
\begin{align}
 RR (t,t) - \frac{RR (t+d,t) RR (t+d,t)}{RR (t,t)} &= O(h^4),  \quad  \lambda, \; \alpha^{-1} = O(h), \; h\to 0, \label{eqn:as1}\\
 QQ (t,t) -  \frac{QR (t+d,t) QR ^\dagger(t+d,t)}{RR (t,t)} & = O(h^4),  \quad  \lambda, \; \alpha^{-1} = O(h), \;h\to 0,
\label{eqn:as2}
\end{align} 
\noindent where $d>0$, $t,t+d \in [0,L]$, and $\lambda\geq h$.
\end{lem}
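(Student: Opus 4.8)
The plan is to reduce the assertion to a single bound on the \emph{first} update of Algorithm~\ref{alg:ivpsolution} and then propagate that bound to all later steps using the monotonicity already proved in Lemma~\ref{lem:covinequality}. First I would rewrite the first-update covariances through the kernels $RR$, $QQ$, $QR$ of the process-convolution formulation in Appendix~\ref{sec:latentupdate}, where $C_t^0 = \alpha^{-1}RR$, $C^0 = \alpha^{-1}QQ$, and $\int_a^{t}C_t^0(z,s)\,\mathrm{d}z = \alpha^{-1}QR(t,s)$. Substituting these together with $g_1 = C_t^0(s_1,s_1)$ into the first-update formulas of Lemma~\ref{lem:gaussiansolution}, and using symmetry of $C_t^0$, yields, for $s_1 = a$ and every $t \in [a,b]$,
\begin{align*}
C_t^1(t,t) &= \alpha^{-1}\Bigl(\,RR(t,t) - RR(t,s_1)^2/RR(s_1,s_1)\,\Bigr),\\
C^1(t,t) &= \alpha^{-1}\Bigl(\,QQ(t,t) - QR(t,s_1)\,QR^{\dagger}(t,s_1)/RR(s_1,s_1)\,\Bigr).
\end{align*}

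Next I would invoke stationarity of $R_\lambda$, which renders the self-convolution $RR$ stationary and hence $RR(s_1,s_1) = RR(t,t)$; with $d = t-a$ the two bracketed quantities are then precisely the left-hand sides of conditions~(\ref{eqn:as1}) and~(\ref{eqn:as2}), so each is $O(h^4)$ as $\lambda,\alpha^{-1} = O(h)$, uniformly in $t \in [a,b]$. The degenerate value $t = a$, i.e.\ $d = 0$, is immediate since $C_t^1(a,a) = 0$ and $QQ(a,a) = 0$; and $\lambda \ge h$ together with $\lambda = O(h)$ just fixes $\lambda \asymp h$, consistent with the hypotheses. Since $\alpha^{-1} = O(h) \subseteq O(1)$, multiplying the brackets by $\alpha^{-1}$ preserves the rate, giving $C^1(t,t) = O(h^4)$ and $C_t^1(t,t) = O(h^4)$ uniformly on $[a,b]$.

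For the remaining steps $1 < n \le N$, Lemma~\ref{lem:covinequality} supplies $C^n(t,t) \le C^1(t,t)$ and $C_t^n(t,t) \le C_t^1(t,t)$, so the diagonal covariances are $O(h^4)$ uniformly in $n$ and $t$. To pass to off-diagonal arguments $t_j \neq t_k$, I would use that $C^n$ and $C_t^n$ are genuine (symmetric, positive-semidefinite) covariance operators, so Cauchy--Schwarz gives $|C^n(t_j,t_k)| \le \sqrt{C^n(t_j,t_j)\,C^n(t_k,t_k)}$ and the analogous bound for $C_t^n$, yielding the $O(h^4)$ rate everywhere on $[a,b]^2$.

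The only genuinely fiddly part is the first step: carefully converting the Lemma~\ref{lem:gaussiansolution} updates into expressions in $RR$, $QQ$, $QR$ — keeping track of the integrated kernels, of which slot is being integrated, and of the identification of $QR^{\dagger}(t,s_1)$ with $QR(t,s_1)$ for scalar evaluation points — so that the brackets line up verbatim with the left-hand sides of~(\ref{eqn:as1})--(\ref{eqn:as2}). There is no conceptual obstacle: conditions~(\ref{eqn:as1})--(\ref{eqn:as2}) are precisely the post-first-update variances (up to the scalar $\alpha^{-1}$ and the stationarity identity for $RR$), and once the first update is controlled, Lemma~\ref{lem:covinequality} carries the estimate through the rest of the algorithm at no cost.
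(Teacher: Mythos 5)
Your proposal is correct and follows essentially the same route as the paper's own proof: express the first-update variances $C_t^1(t,t)$ and $C^1(t,t)$ as the left-hand sides of (\ref{eqn:as1})--(\ref{eqn:as2}), propagate the $O(h^4)$ bound to every step $n$ via Lemma \ref{lem:covinequality}, and pass to off-diagonal arguments by Cauchy--Schwarz. Your extra bookkeeping (stationarity giving $RR(s_1,s_1)=RR(t,t)$, the identification $d=t-a$, and the explicit $\alpha^{-1}$ scaling) merely makes explicit what the paper leaves implicit.
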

\begin{proof} 
From Lemma \ref{lem:covinequality} and assumption (\ref{eqn:as1}) we obtain,
\begin{align*}
 C_t^{n}(t,t) \leq  C_t^{1}(t,t) =  RR (t,t) -  \frac{RR (t,s_1) RR (s_1,t)}{RR (s_1,s_1)} = O(h^4),  \quad  \lambda, \alpha^{-1} = O(h),\; h\to 0, \; t\in[0,L], \; 1\leq n \leq N. 
\end{align*}
\noindent Similarly, using Lemma \ref{lem:covinequality}  and assumption (\ref{eqn:as2}) yields,
\begin{align*}
 C^{n}(t,t) \leq  C^{1}(t,t) =  QQ (t,t) -  \frac{QR (t,s_1) QR ^\dagger(t,s_1)}{RR (s_1,s_1)} = O(h^4),  \quad  \lambda, \alpha^{-1} = O(h),\; h\to 0, \; t\in[0,L], \; 1\leq n \leq N. 
\end{align*}
Then, by the Cauchy\--Schwarz inequality,
\[
\big|  C_t^n(t_j,t_k) \big| , \big|  C^n(t_j,t_k) \big| = O(h^4),  \;  \lambda, \alpha^{-1} = O(h),\; h\to 0, \;  t_j, t_k\in[0,L], \; 1\leq n \leq N. 
\]
\end{proof}
\vspace{-0.5cm}
The square exponential and uniform covariance functions considered in this paper are stationary and symmetric and satisfy conditions (\ref{eqn:as1}) and (\ref{eqn:as2}).  
\begin{lem}
\label{lem:Lambda}
The covariance $C_t^{n-1}(s_n,s_n)$, $1\leq n \leq N$ in Algorithm \ref{alg:ivpsolution},
tends to zero at the rate $O(h^4)$ as $N\to\infty$ and $h\to 0$ when conditions (\ref{eqn:as1}) and (\ref{eqn:as2}) are satisfied.  
\end{lem}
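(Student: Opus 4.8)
The plan is to derive this as an immediate corollary of the variance monotonicity in Lemma~\ref{lem:covinequality} together with the decay condition~(\ref{eqn:as1}) already imposed on the base kernel.  Fix $n$ with $2\le n\le N$.  Since exactly $n-1\ge 1$ updates have been performed before the model is interrogated at $s_n$, Lemma~\ref{lem:covinequality} gives $C_t^{n-1}(s_n,s_n)\le C_t^{1}(s_n,s_n)$; the only ingredients are the non-negativity of each conditional derivative variance and the rank-one downdate recursion of Lemma~\ref{lem:gaussiansolution}.

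Next I would make $C_t^{1}(s_n,s_n)$ explicit from the first-update formula in Lemma~\ref{lem:gaussiansolution} (with $g_1=C_t^{0}(s_1,s_1)$ and $m^0\equiv 0$): since $C_t^{0}=RR$ by the process-convolution construction of Section~\ref{sec:priors},
\[
C_t^{1}(s_n,s_n) \;=\; RR(s_n,s_n) \;-\; \frac{RR(s_n,s_1)\,RR(s_1,s_n)}{RR(s_1,s_1)}.
\]
Invoking stationarity of $R_\lambda$, I would then apply condition~(\ref{eqn:as1}) with $t=s_1$ and $d=s_n-s_1\in(0,b-a]$, noting that the $O(h^4)$ bound there holds uniformly over admissible $t,t+d\in[a,b]$ in the regime $\lambda,\alpha^{-1}=O(h)$, $h\to 0$, and that all separations $s_n-s_1$ lie in the fixed compact interval $[0,b-a]$ (with $\lambda\ge h$).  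This yields $C_t^{1}(s_n,s_n)=O(h^4)$ with a constant independent of $n$, hence $\max_{2\le n\le N}C_t^{n-1}(s_n,s_n)=O(h^4)$ as $h\to 0$ (equivalently $N\to\infty$ with $h\to 0$).  The remaining index $n=1$ is a boundary case: $C_t^{0}(s_1,s_1)=RR(s_1,s_1)=\alpha^{-1}\smallint_{\mathbb{R}}R_\lambda(s_1,z)^2\,\mathrm{d}z=O(\alpha^{-1}\lambda)=O(h^2)$, which still vanishes, and in any event carries no model--prior mismatch since the first update conditions on the exact boundary derivative $\mbox{f}_1=u_t^*(s_1)$ used in the proof of Theorem~\ref{thm:meanconvergence}.

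I do not expect a genuine obstacle here: the statement is essentially a repackaging of Lemmas~\ref{lem:covinequality} and~\ref{lem:covariance}.  The only points needing a little care are (i) keeping the $O(h^4)$ bound uniform in $n$, which is exactly why condition~(\ref{eqn:as1}) is phrased uniformly over $t,t+d\in[a,b]$, and (ii) recognising that the relevant kernel separation in $C_t^{1}(s_n,s_n)$ is $s_n-s_1$ rather than the single-step gap $h$; because $R_\lambda$ is stationary with $\lambda\ge h$, the cancellation of the leading lower-order term in~(\ref{eqn:as1}) still occurs and the remainder is $O(h^4)$ for any fixed-range separation.
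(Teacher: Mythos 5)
Your argument is correct and takes essentially the same route as the paper: the paper's proof simply cites Lemma \ref{lem:covariance}, whose own proof is exactly your chain — monotonicity from Lemma \ref{lem:covinequality} reducing $C_t^{n-1}(s_n,s_n)$ to $C_t^{1}(s_n,s_n)$, then condition (\ref{eqn:as1}) applied under stationarity at separation $s_n-s_1$. Your remark on the $n=1$ boundary case (where the prior variance is only $O(h^2)$, though the interrogation there is exact) is a fair point of care that the paper's one-line proof glosses over, but it does not change the substance.
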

\vspace{-2cm}
\begin{proof}
The proof follows immediately from Lemma \ref{lem:covariance}.
\end{proof}

\subsection{Some covariance kernels and their convolutions}\label{sec:exactconv}

 The examples in this paper utilize two types of covariance functions, although the results presented are more generally applicable.  Imposing unrealistically strict smoothness assumptions on the state space by choice of covariance structure may introduce estimation bias if the exact solution is less smooth than expected.  In this paper we work with stationary derivative covariance structures for simplicity,  however there will be classes of problems where non-stationary kernels may be more appropriate. 

The infinitely differentiable squared exponential covariance is based on the kernel
$R_\lambda(t_j,t_k) = \exp \left\{ -{(t_j-t_k)^2}/{2\lambda^2}\right\}$.  We utilize this covariance structure in the forward models of the Lorenz63 system, the Navier-Stokes equations, the Lane-Emden mixed boundary value problem, and the heat equation.  In contrast, solutions for systems of delay initial function problems are often characterized by second derivative discontinuities at the lag locations.  For these forward models we utilize the uniform derivative covariance structure, obtained by convolving the kernel $R_\lambda(t_j,t_k) =  \mbox{1}_{(t_j-\lambda,t_j+\lambda)}(t_k)$ with itself, is non-differentiable.  Given the shape of the uniform kernel, choosing the length-scale $\lambda$, to be greater than one half of the maximum step length ensures that the resulting step ahead prediction captures information from at least one previous model interrogation.

Closed form expressions for the pairwise convolutions for the two covariance functions are provided below and implemented in the accompanying software.  Let $R_\lambda$ be the squared exponential kernel and let $Q_\lambda$ be its integrated version. Then,
\begin{align*}
\begin{array}{ll}
\alpha RR (t_j,t_k) &= 
\sqrt{\pi}\lambda \exp\left \{ -\tfrac{(t_j - t_k)^2}{4\lambda^2}\right \},
\\
\alpha QR (t_j,t_k) &=
\pi\lambda^2 \left( \mbox{erf}\left\{\tfrac{t_j - t_k}{2\lambda}\right\}
+ \mbox{erf}\left\{\tfrac{t_k-a}{2\lambda}\right\} \right),
\\
\alpha QQ (t_j,t_k) 
&=
\pi\lambda^2 \left( (t_j-a)\mbox{erf}\left\{\tfrac{t_j-a}{2\lambda}\right\}
- (t_k-t_j)\mbox{erf}\left\{\tfrac{t_k-t_j}{2\lambda}\right\}
+ (t_k-a)\mbox{erf}\left\{\tfrac{t_k-a}{2\lambda}\right\}
\right),
\\
&+ \, 2\sqrt{\pi}\lambda^3 \left( \exp\left\{-\tfrac{(t_j-a)^2}{4\lambda^2}\right\}
-  \exp\left\{-\tfrac{(t_k-t_j)^2}{4\lambda^2}\right\}
+  \exp\left\{-\tfrac{(t_k-a)^2}{4\lambda^2}\right\}
 - 1
 \right).
 \end{array}
\end{align*}
 Next, let  $R_\lambda$ be the uniform kernel and let $Q_\lambda$ be its integrated version. Then,
\begin{align*}
\begin{array}{ll}
\alpha RR (t_j,t_k) &=\{ \mbox{min}(t_j,t_k) - \mbox{max}(t_j,t_k) +2\lambda \} \; \mbox{1}_{(0,\infty)}\{ \mbox{min}(t_j,t_k)-\mbox{max}(t_j,t_k) +  2\lambda \} ,\\
\alpha QR (t_j,t_k) 
&=  2\lambda \{\mbox{min}(t_j-\lambda,t_k+\lambda) - \mbox{max}(a+\lambda,t_k-\lambda)\} \\ 
&  \quad\mbox{1}_{(0,\infty)}\{ \mbox{min}(t_j-\lambda,t_k+\lambda) - \mbox{max}(a+\lambda,t_k-\lambda) \}\\
& + (t_j-a) \{\mbox{min}(t_j,t_k)-a-2\lambda\} \,\mbox{1}_{(0,\infty)}\{ \mbox{min}(t_j,t_k) - a - 2\lambda\}\\
& + \left[(t_j+\lambda)\{\mbox{min}(t_j,t_k)- \mbox{max}(a+\lambda,t_j-\lambda,t_k-\lambda) +\lambda  \} \right. \\
& \left. \quad - \tfrac{1}{2} \mbox{min}(t_j+\lambda,t_k+\lambda)^2 + \tfrac{1}{2}\mbox{max}(a+\lambda,t_j-\lambda,t_k-\lambda)^2  \right]\\
&   \quad  \quad\mbox{1}_{(0,\infty)}\{\mbox{min}(t_j,t_k)+\lambda - \mbox{max}(a+\lambda,t_j-\lambda,t_k-\lambda) \}\\
&+\left[(\lambda-a)\{\mbox{min}(a+\lambda,t_j-\lambda,t_k+\lambda) - (t_k-\lambda) \} \right.\\
& \left.\quad+ \tfrac{1}{2}\mbox{min}(a+\lambda,t_j-\lambda,t_k+\lambda)^2- \tfrac{1}{2}(t_k-\lambda)^2\right] \\
&\quad\quad \mbox{1}_{(0,\infty)}\{\mbox{min}(a+\lambda,t_j-\lambda,t_k+\lambda) - (t_k-\lambda) \}, \\
\alpha QQ (t_j,t_k) 
&= 4\lambda^2 \{\mbox{min}(t_j,t_k) - a -2\lambda\}\, \mbox{1}_{(0,\infty)}\{\mbox{min}(t_j,t_k) - a - 2\lambda\}\\
&  +2\lambda\left[(t_k+\lambda)\{ \mbox{min}(t_j-\lambda,t_k+\lambda) - \mbox{max}(a+\lambda,t_k-\lambda)\} \right.  \\
& \left. \quad - \frac{1}{2}\mbox{min}(t_j-\lambda,t_k+\lambda)^2+ \tfrac{1}{2}\mbox{max}(a+\lambda,t_k-\lambda)^2 \right] \\
&\quad\quad \mbox{1}_{(0,\infty)}\{\mbox{min}(t_j-\lambda,t_k+\lambda) - \mbox{max}(a+\lambda,t_k-\lambda)\}\\
&  +\left[ \tfrac{1}{3} \mbox{min}(a+\lambda,t_j-\lambda,t_k-\lambda) ^3 - \tfrac{1}{3}(a-\lambda)^3 \right.\\ 
&\left. \quad + (\lambda-a)\{ \mbox{min}(a+\lambda,t_j-\lambda,t_k-\lambda)^2 - (a-\lambda)^2\} \right.\\
& \left. \quad +(\lambda-a)^2\{ \mbox{min}(a+\lambda,t_j-\lambda,t_k-\lambda) - (a-\lambda)\}\right]\\
&\quad\quad \mbox{1}_{(0,\infty)}\{ \mbox{min}(a+\lambda,t_j-\lambda,t_k-\lambda) - (a-\lambda)\}\\
&    + (t_j-a)\left[\tfrac{1}{2}\mbox{min}(a+\lambda,t_k-\lambda) ^2- \tfrac{1}{2}(t_j-\lambda)^2 \right.\\
&\left.\quad+ (\lambda-a)\{\mbox{min}(a+\lambda,t_k-\lambda) - (t_j-\lambda) \}\right]\\
&\quad\quad \mbox{1}_{(0,\infty)}\{\mbox{min}(a+\lambda,t_k-\lambda) - (t_j-\lambda) \}\\
& +(t_k-a)\left[ \tfrac{1}{2}\mbox{min}(a+\lambda,t_j-\lambda) ^2- \tfrac{1}{2}(t_k-\lambda)^2\right.\\
&\left.\quad + (\lambda-a)\{\mbox{min}(a+\lambda,t_j-\lambda) - (t_k-\lambda)\}\right] \, \mbox{1}_{(0,\infty)}\{\mbox{min}(a+\lambda,t_j-\lambda) - (t_k-\lambda)\}\\
&  +2\lambda\left[(t_j+\lambda)\{\mbox{min}(t_j+\lambda,t_k-\lambda) - \mbox{max}(a+\lambda,t_j-\lambda)\} \right.\\
&\left.\quad - \tfrac{1}{2}\mbox{min}(t_j+\lambda,t_k-\lambda)^2  +  \tfrac{1}{2}\mbox{max}(a+\lambda,t_j-\lambda)^2 \right] \\
&\quad\quad \mbox{1}_{(0,\infty)}\{\mbox{min}(t_j+\lambda,t_k-\lambda) - \mbox{max}(a+\lambda,t_j-\lambda)\}\\
&  +\left[(t_j+\lambda)(t_k+\lambda)\{\mbox{min}(t_j,t_k)+\lambda - \mbox{max}(a+\lambda,t_j-\lambda,t_k-\lambda)\} \right.\\
&\left.\quad- \tfrac{1}{2}(t_j+t_k+2\lambda)\{\mbox{min}(t_j+\lambda ,t_k+\lambda )^2- \mbox{max}(a+\lambda,t_j-\lambda,t_k-\lambda)^2\} \right.\\
&\left.\quad  + \tfrac{1}{3}\mbox{min}(t_j+\lambda,t_k+\lambda)^3 - \tfrac{1}{3}\mbox{max}(a+\lambda,t_j-\lambda,t_k-\lambda)^3 \right] \\
&\quad\quad \mbox{1}_{(0,\infty)}\{\mbox{min}(t_j,t_k)+\lambda - \mbox{max}(a+\lambda,t_j-\lambda,t_k-\lambda)\}\\
& + (t_j-a)(t_k-a) \{ a+2\lambda - \mbox{max}(t_j,t_k) \}\, \mbox{1}_{(0,\infty)}\{a+2\lambda - \mbox{max}(t_j,t_k)\}.
\end{array}
\end{align*}
%

\bibliographystyle{ba}
\bibliography{refs-sup}




\end{document}